\documentclass[runningheads]{llncs}
\usepackage{amsmath}

\usepackage{tikz}
\usepackage{pgfplots}
\usepackage{xcolor}
\usetikzlibrary{fpu,shapes,patterns}
\usepackage{color}
\usepackage[pdfencoding=auto]{hyperref}
\usepackage{cleveref}
\usepackage{subcaption}
\usepackage{ifthen}
\usepackage[misc]{ifsym}

\hypersetup{
    colorlinks=true,
    linkcolor=blue,
    urlcolor=blue,
    linktoc=all,
    citecolor=blue
}

\def\isanonymized{0}  

\newcommand{\conditionalAnonymize}[1]{%
    \ifnum\isanonymized=1
        Anonymized%
    \else
        #1%
    \fi
}

\definecolor{okabe1}{HTML}{000000}
\definecolor{okabe2}{HTML}{E69F00}
\definecolor{okabe3}{HTML}{56B4E9}
\definecolor{okabe4}{HTML}{009E73}
\definecolor{okabe5}{HTML}{F0E442}
\definecolor{okabe6}{HTML}{0072B2}
\definecolor{okabe7}{HTML}{D55E00}
\definecolor{okabe8}{HTML}{CC79A7}

\newcommand{\bem}[1]{\textbf{\emph{#1}}}

\newcommand{\Crefpart}[2]{%
	\nameCref{#1}~\hyperref[#2]{\labelcref*{#1}.\ref*{#2}}%
}
\pgfplotsset{compat=1.18}


\begin{document}
\title{Highway Preferential Attachment Models for Geographic Routing}
\author{\conditionalAnonymize{Ofek Gila\textsuperscript{(\Letter)}} \and
\conditionalAnonymize{Evrim Ozel\textsuperscript{(\Letter)}} \and
\conditionalAnonymize{Michael Goodrich\textsuperscript{(\Letter)}}}
\authorrunning{\conditionalAnonymize{O. Gila} et al.}
\institute{\conditionalAnonymize{University of California, Irvine CA 92617, USA \email{\{ogila,eozel,goodrich\}@uci.edu}}}

\maketitle

\begin{abstract}

	In the 1960s, the world-renowned social psychologist Stanley Milgram
	conducted experiments that showed that not only do there exist ``short
	chains'' of acquaintances between any two arbitrary people, but that these
	arbitrary strangers are able to find these short chains. This phenomenon,
	known as the \emph{small-world phenomenon}, is explained in part by any model
	that has a low diameter, such as the Barab\'asi and Albert's
	\emph{preferential attachment} model, but these models do not display the
	same efficient routing that Milgram's experiments showed. In the year 2000,
	Kleinberg proposed a model with an efficient $\mathcal{O}(\log^2{n})$ greedy
	routing algorithm. In 2004, Martel and Nguyen showed 
    that Kleinberg's analysis was tight, while also showing that Kleinberg's model had an expected
	diameter of only $\Theta(\log{n})$---a much smaller value than the greedy
	routing algorithm's path lengths. In 2022, Goodrich and Ozel proposed the
	\emph{neighborhood preferential attachment} model (NPA), combining elements from
	Barab\'asi and Albert's model with Kleinberg's model, and experimentally
	showed that the resulting model outperformed Kleinberg's greedy routing
	performance on U.S. road networks. While they displayed impressive empirical
	results, they did not provide any theoretical analysis of their model. In
	this paper, we first provide a theoretical analysis of a generalization of
	Kleinberg's original model and show that it 
    can achieve expected $\mathcal{O}(\log{n})$ 
    routing, a much better result than Kleinberg's model. We then propose a new model, \emph{windowed NPA}, that is
	similar to the neighborhood preferential attachment model but has provable theoretical
	guarantees w.h.p. We show that this model is able to achieve
	$\mathcal{O}(\log^{1 + \epsilon}{n})$ greedy routing for any $\epsilon > 0$.

\end{abstract}



\keywords{small worlds, social networks, random graphs}

\section{Introduction}

Stanley Milgram, a social psychologist, popularized the concept of the
\bem{small-world phenomenon} through two groundbreaking experiments in the 1960s
\cite{milgramstanleySmallWorldProblem1967,traversExperimentalStudySmall1969}. In
these experiments, Milgram determined that the median number of hops from a
random volunteer in Nebraska and Boston to a stockbroker in Boston was six,
thereby giving rise to the expression ``six degrees of separation''.

A common and well-studied method for modeling real-world social networks is the
\bem{preferential attachment} model, popularized by Barab\'asi and Albert in
1999 \cite{barabasiEmergenceScalingRandom1999}. In this model, nodes are added
to the graph one at a time, and each node is connected to $m$ other nodes with
probability proportional to their degree. Put simply, in this model, nodes with
a greater degree are more likely to obtain an even greater degree, in what is
commonly referred to as a ``rich-get-richer'' process. Such a process leads to
power law degree distributions, meaning that the number of nodes with degree $k$
is proportional to $k^{-\alpha}$ for some constant $\alpha > 1$. In 2009,
Dommers, Hofstad, and Hooghiemstra showed that the diameter of the preferential
attachment model is $\Omega(\log{n})$ when the power law exponent $\alpha > 3$,
and $\Omega(\log{\log{n}})$ when $\alpha \in (2, 3)$
\cite{dommersDiametersPreferentialAttachment2010}. While such preferential
attachment models indeed display small diameters, therefore explaining how these
short paths \textit{exist}, they do not explain how these paths are
\textit{found}. In other words, individual nodes in these models, using only
local information, cannot find short paths to other nodes, unlike in Milgram's
experiments.

In 2003 Dodds, Muhamad, and Watts conducted an experiment similar to Milgram's
using email, with more than 60,000 volunteers and 18 targets in 13 countries.
This experiment determined that the average number of hops was around five if
the target was in the same country and seven if the target was in a different
country, largely in line with Milgram's results. This experiment
asked participants the reasons for picking their next particular acquaintance,
finding that, especially during the early stages of routing, geographical
proximity was the dominant factor \cite{doddsExperimentalStudySearch2003}. This
result suggests that realistic models aiming to explain the small-world
phenomenon should incorporate geographical information.

\subsection{Kleinberg's Model}

In 2000, Jon Kleinberg proposed a famous model that, while not incorporating
true geographical information, does consider a notion of geographic distance
by placing nodes on an $n \times n$ grid. Kleinberg's model connects nodes using
two types of connections---\bem{local connections}, in which nodes are connected
to all neighbors within a fixed lattice distance, and \bem{long-range
connections}, in which nodes are connected to random nodes in the graph.
Importantly, these long-range connections are chosen with distance in mind,
namely that closer nodes are picked more often as long-range connections than
farther nodes. Specifically, each node $u$ picks long-range connection $v$ with
probability proportional to $d(u, v)^{-s}$, where $d(u, v)$ is the lattice
distance between $u$ and $v$ and $s$ is the clustering exponent. This model
mimics how individuals in a social network are more likely to know people who
are geographically closer to them, but also have a small probability of knowing
people who are farther away. Kleinberg showed that, for $s = 2$, a greedy
routing algorithm can find paths of length $\mathcal{O}(\log^2{n})$ with high
probability (w.h.p.), and that this is optimal for any $s$\footnote{for 2-d
grids.} \cite{kleinbergSmallworldPhenomenonAlgorithmic2000a}. In 2004, Martel and
Nguyen proved tight bounds of expected $\Theta(\log^2{n})$ hops for greedy
routing, and of expected diameter of $\Theta(\log{n})$---highlighting the large
discrepancy between the two \cite{martelAnalyzingKleinbergOther2004a}. We are
not aware of any other work that achieves an asymptotically better expected
number of greedy routing hops using a constant average node degree and using
only a constant average amount of local information per node.

\subsection{The Neighborhood Preferential Attachment Model}

In 2022, Goodrich and Ozel proposed a new model that combines the preferential
attachment model with Kleinberg's model, which they call the \bem{neighborhood
preferential attachment} model \cite{goodrichModelingSmallworldPhenomenon2022}.
In this model, as in the Barab\'asi-Albert model, nodes are added to the graph
one at a time, but instead of connecting to nodes solely based on their degree
as in the preferential attachment models, they also take into account the
distance between the nodes, as in Kleinberg's model. Specifically, each node $u$
picks a node $v$ with probability proportional to $\deg(v) / d(u, v)^s$, where
$\deg(v)$ is the current degree of vertex $v$. Furthermore, Goodrich and Ozel
expanded all three models (Barab\'asi-Albert, Kleinberg, and their own) to work
with underlying distances defined by a road network rather than a grid. In their
work, they conducted rigorous experiments on U.S.A. road networks and showed
that their model is able to outperform both the
constituent
models in terms of
average greedy routing hops between randomly chosen pairs of nodes.
In their paper, they describe how road networks serve as good proxies for social networks since the density of road infrastructure is correlated with population density.
Their model was, at the time, the only randomized model to not only capture a proxy for the position of nodes in a social network, but also the power law distribution of node degrees that is widely common social networks.
These two facts allowed this model to be the first randomized model able to reproduce results from Stanley Milgram's original small-worlds social experiment using a small average degree (only of around 30). 
However,
importantly, they did not prove any theoretical bounds on their model. Our paper
can be seen as a theoretical complement to their work, as we prove high
probability bounds on the average greedy routing path length of a grid version
of a very similar model, showing that it is far better than the
$\Theta(\log^2{n})$ bound of Kleinberg's model.

\subsection{Our Results}

As stated before, our main goal for this paper was to provide theoretical
results for the work of Goodrich and Ozel, or more generally, for preferential
attachment variations of Kleinberg's model. In this paper, we propose three new
models, each combining aspects of both Kleinberg's model and the preferential
attachment model. We prove that, for grid networks, each of our networks are
able to asymptotically outperform Kleinberg's original model in terms of average
greedy routing path length, while maintaining an expected constant average node
degree and using a constant average amount of local information per node.

We note that greedy routing can be improved by relaxing either of these two
constraints. For example, if we allow nodes in the Kleinberg model to have
access to more local information, we can improve greedy routing to
$\mathcal{O}(\log^{3/2}{n})$. Similarly, if we allow nodes to have a higher,
$\mathcal{O}(\log{n})$, average degree, then we can improve greedy routing to
$\mathcal{O}(\log{n})$ hops \cite{martelAnalyzingKleinbergOther2004a}. The
latter of these two relaxations reveals that greedy routing can be greatly
improved by getting to---and staying on---high degree nodes. With this in mind,
we consider a node \bem{highway}---a set of interconnected nodes that each have
higher than average degrees. Our first two models introduce a parameter $k$ that
controls both the size of the highway and the degree of nodes on the highway.
Specifically, the degree of nodes on the highway is proportional to $k$ while
the number of highway nodes is inversely proportional to $k$, such that
the average degree of the entire graph is constant.

Our first model, the \bem{Kleinberg highway model} (KH), works by embedding a
Kleinberg grid within an $n \times n$ grid, such that there are $n^2 / k$ nodes
on the highway. Each of the nodes on the highway grid only chooses long-range
connections to other nodes on the highway grid, while local connections are
still made to all neighbors within a fixed lattice distance as in the original
Kleinberg model. Our second model, the \bem{randomized highway model} (RH), is
similar to the first, but instead of embedding a perfect Kleinberg grid inside
the original graph, nodes are chosen uniformly at random to be on the highway
grid. More specifically, each node has probability $1/k$ to become a highway
node, leading to an expected $\Theta(n^2/k)$ highway nodes w.h.p. Both of these
generalizations reduce to the original Kleinberg model when $k = 1$, that is
when every node is a highway node, and adds a constant number of long-range
connections per node. Importantly, both models reach a global minimum of
$\mathcal{O}(\log{n})$ hops when $k = \Theta(\log{n})$, a much better result
than Kleinberg's (see \Cref{fig:paramk}).


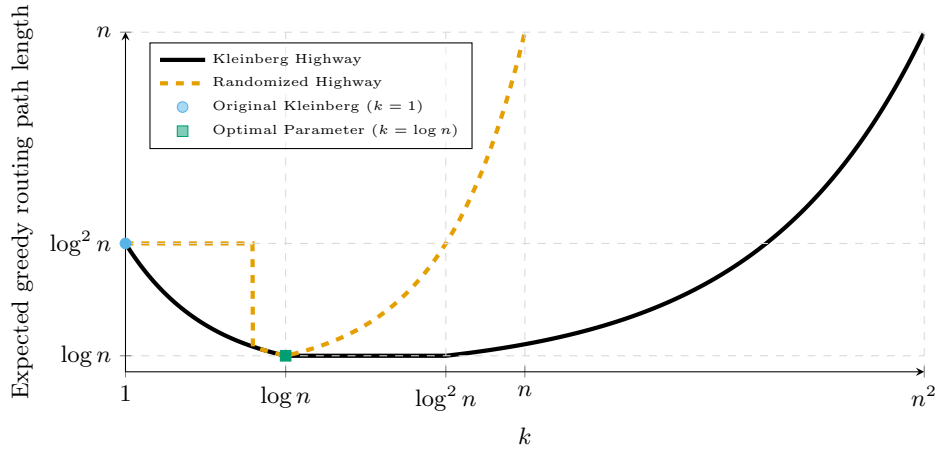
\begin{figure}
    \centering
    \begin{tikzpicture}
    \pgfmathsetmacro{\n}{128}

    \pgfkeys{/pgf/fpu=true}
    \pgfmathparse{\n * \n}
    \edef\nsq{\pgfmathresult}
    \pgfkeys{/pgf/fpu=false}
    
    \pgfmathsetmacro{\logn}{log2(\n)}
    \pgfmathsetmacro{\lognsq}{\logn * \logn}
    \pgfmathsetmacro{\twologn}{2 * \logn}
    \pgfmathsetmacro{\sqrtn}{sqrt(\n)}
    \pgfmathsetmacro{\q}{\logn / log2(log2(log2(\n)))}

    \begin{semilogxaxis}[
        xlabel=$k$,
        ylabel={Expected greedy routing path length},
        xmin=1,
        xmax=\nsq,
        ymin=1,
        ymax=\n,
        xtick={1,\logn,\lognsq,\n,\nsq},
        xticklabels={$1$, $\log{n}$, $\log^2{n}$, $n$, $n^2$},
        ytick={\logn,\lognsq,\n},
        yticklabels={$\log{n}$, $\log^2{n}$, $n$},
        legend pos=north west,
        legend cell align={left},
        legend style={font=\tiny,fill=white, fill opacity=0.5,draw=black,text opacity=1},
        width=\columnwidth,
        height=0.5\columnwidth,
        grid=major,
        grid style={dashed,gray!30},
        axis lines=left,
        enlargelimits=false,
        clip=false,
        axis on top,
        scale=1.0,
        legend entries={Kleinberg Highway, Randomized Highway, Original Kleinberg ($k = 1$), Optimal Parameter ($k = \log{n}$)}
    ]
        \addplot[domain=1:\nsq, color=okabe1, ultra thick,samples=1000] {
            ifthenelse(x<\logn, \lognsq/x, max(sqrt(x), \logn))};

        \addplot[domain=1:\n, color=okabe2, ultra thick, dashed, samples=1000] {
            ifthenelse(x<\q, \lognsq, ifthenelse(x<\logn, \lognsq/x, x))};

        \addplot[only marks, mark=*, mark size=2pt, text mark as node=true, color=okabe3] coordinates {(1,\lognsq)};
        \addplot[only marks, mark=square*, mark size=2pt, text mark as node=true, color=okabe4] coordinates {(\logn,\logn)};
    \end{semilogxaxis}
\end{tikzpicture}
    \vspace*{-0.5cm}
    \caption{ \label{fig:paramk}The average greedy routing path length of the
        Kleinberg highway model for different values of parameter $k$. }
\end{figure}

Our final model is the \bem{windowed neighborhood preferential attachment model}
(windowed NPA), which like Goodrich and Ozel's neighborhood preferential
attachment model (NPA), is based on both Kleinberg's model and the preferential
attachment model. There are two main differences between the models. First, in
the NPA model, the power law degree distribution naturally arises from the
rich-get-richer selection property when adding new edges. In contrast, in our
model, the power law degree distribution is strictly enforced, with each node
picking a popularity $k$ with probability $\propto k^{-\alpha}$. Each node node
then adds a number of long-range connections proportional to its popularity. In
order to maintain a constant average degree, the power law exponent $\alpha$
must be greater than 2, so $\alpha \geq 2 + \epsilon$ for any $\epsilon > 0$.
The second main difference is that instead of there existing a probability of
any two nodes being connected, in the windowed NPA model, nodes are only
connected to other nodes within a constant factor of their popularity. The idea
being that a residential street is more likely to connect to an alley, another
residential street, or an arterial road, than it is to connect directly to a
highway. This constant factor is controlled by a parameter $A$, and any node $u$
with popularity $k_u$ can only have long-range connections to nodes with
popularity $k_v$ such that $k_v \in [k_u / A, k_u \cdot A]$. We prove that for
any arbitrarily small $\epsilon > 0$, the average greedy routing path length of
the windowed NPA model is $\mathcal{O}(\log^{1 + \epsilon}{n})$
w.h.p.\footnote{We proved this for a slightly modified greedy routing
algorithm.} While this result only holds for grid networks, we provide
experimental results of our new model on both grid and road networks, showing
that the windowed NPA model is able to outperform Kleinberg's model on both
types of networks.

\section{Preliminaries}


As stated before, for the theoretical analysis, we will be using an $n \times n$
grid, such that the total number of nodes $|V| = n^2$. For simplicity, we will
assume that our grid has wrap-around edges, as is common when analyzing grid
networks \cite{martelAnalyzingKleinbergOther2004a}, although our results can be
extended to non-wrap-around grids. Let $d(u, v)$ be defined as the lattice
distance between two nodes $u$ and $v$ in the grid, i.e. $d(u, v) =
\min(\delta_x, n - \delta_x) + \min(\delta_y, n - \delta_y)$, where $\delta_x$
and $\delta_y$ are the absolute differences in the $x$ and $y$ coordinates of
$u$ and $v$, respectively. Let $B_d(u)$ denote the set of nodes within lattice
distance $d$ from $u$. All three models have the notion of \bem{local
connections} and \bem{long-range connections}. Without loss of generality, we
will only consider the case where we only add immediately adjacent local
connections, that is, each node is only connected to the four nodes directly
above, below, to the left, and to the right of it. Equivalently, we can say that
each node is connected to all other nodes in $B_1(u)$, as in the case when $p =
1$ in Kleinberg's original model. In this paper, when we refer to a node's
degree $\deg(u)$, we will be referring to the number of long-range
connections from $u$.


\section{Kleinberg Highway}\label{sec:kh}

As stated before, Kleinberg's model is defined on a graph $\mathcal{G}$
comprising of an $n \times n$ grid where each node $u$ adds local connections to
all nodes in $B_P(u)$ (all nodes within lattice distance $P$ of $u$), and $Q$
long-range connections to other nodes. The probability of adding a long-range
connection to node $v$ is proportional to $d(u, v)^{-r}$. In our model, we will
set $P$ to 1 w.l.g., and we will set $r = 2$, as this is the value that
Kleinberg showed was optimal for 2-dimensional grids, and Goodrich and Ozel
hypothesized could be optimal for road networks
\cite{kleinbergSmallworldPhenomenonAlgorithmic2000a,goodrichModelingSmallworldPhenomenon2022}. Furthermore, in our model, we will
define a subgraph $\mathcal{G}_H$, known as the \bem{highway}, which for this
model is an $n_H \times n_H$ evenly spaced grid in $\mathcal{G}$. We introduce a
new parameter $k$ in the range of $1 \leq k \leq n^2$, where $1/k$ of the nodes
are designated as \bem{highway nodes}, meaning that $n_H$ is equal to
$n/\sqrt{k}$ (which for simplicity we assume is a whole number). Now, we
introduce two forms of local connections, the first connects all nodes in the
entire graph $\mathcal{G}$ to their neighbors, and the second connects all nodes
in the highway subgraph $\mathcal{G}_H$ to their highway neighbors. Finally, and
importantly, only highway nodes are able to add long-range connections, and
these long-range connections are \textit{directed} edges added only \textit{to
other highway nodes} (see \Cref{fig:kh}). Since there are fewer highway nodes,
we are able to add proportionally more long-range connections per node to
maintain the same constant average degree $Q$. In particular, each highway node
is able to add $Q \times k$ long-range connections, where $Q$, as in the
original Kleinberg model, represents the average highway degree. Put simply,
$\mathcal{G}_H$ is a Kleinberg graph with Kleinberg parameters: $n = n_H =
n/\sqrt{k}$, $p = 1$, $q = Q \times k$, $r = 2$. We call the entire graph
$\mathcal{G}$ the \bem{Kleinberg highway} model.

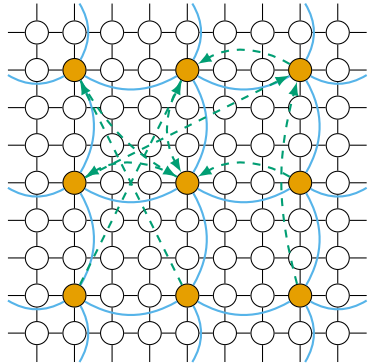
\begin{figure}
	\centering
	\begin{tikzpicture}[
	c/.style={circle, draw=okabe1},
	h/.style={circle, draw=okabe1, fill=okabe2},
	l/.style={draw=okabe3,thick},
	lr/.style={draw=okabe4,thick,dashed,-latex},
]
	\pgfmathsetmacro{\n}{9}
	\pgfmathsetmacro{\k}{3}
	\pgfmathtruncatemacro{\npo}{\n + 1}
	\pgfmathtruncatemacro{\kmo}{\k - 1}
	\pgfmathtruncatemacro{\kmopk}{\kmo + \k}
	\pgfmathtruncatemacro{\nmk}{\n - \k}
	\pgfmathtruncatemacro{\lastk}{\n - mod(\n, \k) - 1}

	\foreach \x in {1,...,\n} {
		\foreach \y in {1,...,\n} {
			\pgfmathtruncatemacro{\isHighway}{(mod(\x,\k)==\kmo && mod(\y,\k)==\kmo) ? 1 : 0}
			\ifnum\isHighway=1
				\node[h] (\x-\y) at (0.5*\x, 0.5*\y) {};
			\else
				\node[c] (\x-\y) at (0.5*\x, 0.5*\y) {};
			\fi
		}
	}

	\foreach \x in {0,...,\npo} {
		\node (\x-0) at (0.5*\x, 0) {};
		\node (\x-\npo) at (0.5*\x, 0.5*\npo) {};
	}

	\foreach \y in {1,...,\n} {
		\node (0-\y) at (0, 0.5*\y) {};
		\node (\npo-\y) at (0.5*\npo, 0.5*\y) {};
	}

	\foreach \x in {1,...,\n} {
		\foreach \y in {1,...,\n} {
			\pgfmathtruncatemacro{\xpo}{\x + 1}
			\pgfmathtruncatemacro{\ypo}{\y + 1}

			\draw (\x-\y) -- (\x-\ypo);
			\draw (\x-\y) -- (\xpo-\y);
		}
	}

	\foreach \i in {1,...,\n} {
		\draw (\i-0) -- (\i-1);
		\draw (0-\i) -- (1-\i);
	}

	\foreach \x in {\kmo,\kmopk,...,\nmk} {
		\pgfmathtruncatemacro{\xpk}{\x + \k}

		\foreach \y in {\kmo,\kmopk,...,\nmk} {
			
			\pgfmathtruncatemacro{\ypk}{\y + \k}

			\draw[l] (\x-\y) to [bend right=30] (\x-\ypk);
			\draw[l] (\x-\y) to [bend right=30] (\xpk-\y);
		}

		\draw[l] (\lastk-\x) to [bend right=30] (\lastk-\xpk);
		\draw[l] (\x-\lastk) to [bend right=30] (\xpk-\lastk);
	}

	\foreach \x in {\kmo,\kmopk,...,\n} {
		\draw[l] (0-\x) to [bend right=30] (\kmo-\x);
		\draw[l] (\x-\lastk) to [bend right=30] (\x-\npo);
		\draw[l] (\lastk-\x) to [bend right=30] (\npo-\x);
		\draw[l] (\x-0) to [bend right=30] (\x-\kmo);
	}

	\draw[lr] (2-2) to (5-8);
	\draw[lr] (2-5) to (8-8);
	\draw[lr] (2-8) to [bend right=15] (5-5);
	\draw[lr] (5-2) to (2-8);
	\draw[lr] (5-5) to [bend right=30] (2-5);
	\draw[lr] (5-8) to [bend right=30] (5-5);
	\draw[lr] (8-2) to [bend left=15] (8-8);
	\draw[lr] (8-5) to [bend right=30] (5-5);
	\draw[lr] (8-8) to [bend right=30] (5-8);
\end{tikzpicture}
	\vspace*{-0.25cm}
	\caption{\label{fig:kh}An example of the Kleinberg highway model with $n =
		9$, $k = 9$, and $Q = 1/9$. The solid black and curved solid blue lines
		represent local connections for the entire grid and for the highway
		grid, respectively. The value of $Q$ was picked such that each highway
		node has only one long-range connection (represented by the dashed light
		green directed lines) to make the graph less cluttered. If $Q$ were 1,
		each highway node would have 9 long-range connections.}
\end{figure}

\subsection{Results}\label{sec:khresults}

Our results depend on whether or not the structure of the highway is known to
the vertices. Due to the structured nature of the highway, we will assume that
its layout is known to all vertices (a constant amount of information), such that nodes know the location of the closest highway node to them. We
will include both results for completeness, and both have the same optimum value
and result, but our standard definition of our model will include this natural
assumption.

We split our decentralized algorithm to route from $s$ to $t$ into three
steps:

\begin{enumerate}
	\item We use local connections in $\mathcal{G}$ to route from $s$ to the
	closest highway node.

        \item We traverse the highway ($\mathcal{G}_H$) using standard Kleinberg routing towards $t$.

	\item Finally, we use the local connections in $\mathcal{G}$ to route to
	$t$.
\end{enumerate}

A straightforward proof, included
for completeness 
in \Cref{sec:khp}, produces the following result:

\begin{theorem}\label{thm:kh_routing} The expected decentralized routing time in
	a \bem{Kleinberg highway} network is $\mathcal{O}(\sqrt{k} + \log^2(n)/k +
	\log{n})$ for $1 \leq k \leq n^2$ when each node knows the positioning of
	the highway grid, and $\mathcal{O}(k + \log^2(n)/k)$ otherwise.
\end{theorem}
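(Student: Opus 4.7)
The plan is to bound the expected hop count in each of the three algorithm phases separately and sum them. Phases 1 and 3 (routing between a vertex and its nearest highway node using only the uniform local edges of $\mathcal{G}$) are symmetric, and the main technical weight lies in phase 2, which is a standard Kleinberg-style greedy routing on the highway subgraph $\mathcal{G}_H$ but with a boosted long-range degree of $Qk$.

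For phases 1 and 3, the starting observation is that because the highway nodes form an evenly spaced $(n/\sqrt{k}) \times (n/\sqrt{k})$ subgrid inside $\mathcal{G}$, every vertex lies within lattice distance $O(\sqrt{k})$ of some highway node. When the highway layout is known, the routing vertex can walk directly toward its known closest highway node along local edges, reaching it in $O(\sqrt{k})$ hops deterministically. When the layout is unknown, the best guaranteed strategy is a spiral search along local connections until a highway node is encountered; since the ball of radius $\sqrt{k}$ around any vertex has $\Theta(k)$ vertices and is guaranteed to contain a highway node, this costs $O(k)$ hops. The same bound applies to phase 3 by symmetry.

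For phase 2, I would adapt Kleinberg's distance-halving argument to $\mathcal{G}_H$, which is itself a Kleinberg graph on $n_H = n/\sqrt{k}$ vertices with clustering exponent $r = 2$ and $q = Qk$ independent long-range edges per vertex. In a phase where the current-to-target distance is in $[2^j, 2^{j+1})$, a single long-range edge from the current vertex lands in the ``good'' halving region with probability $\Theta(1/\log n_H)$. By independence across the $q$ long-range edges of the current vertex, the per-step probability of at least one good jump is $\Theta(\min(1, k/\log n_H))$, so each of the $O(\log n_H)$ phases contributes expected $O(\max(1, \log(n_H)/k))$ hops. Using $\log n_H = \Theta(\log n)$ for all $k \in [1, n^2)$, phase 2 contributes $O(\log^2(n)/k + \log n)$ expected hops. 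Summing the three phase bounds then yields the claim, after noting that the $\log n$ term is absorbed into $k + \log^2(n)/k$ in the unknown-layout case.

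The main obstacle I anticipate is making the boosted phase-2 analysis rigorous: one must verify that Kleinberg's harmonic-sum normalization on $\mathcal{G}_H$ still evaluates to $\Theta(\log n_H)$ so that the single-edge halving probability is indeed $\Theta(1/\log n_H)$, and that the boost from $q = Qk$ independent draws translates cleanly into the $\min$ above---in particular handling the large-$k$ regime where the diameter of $\mathcal{G}_H$ (itself $O(\log n_H)$) dominates the phase count. This is essentially the Martel--Nguyen boosted-degree argument transplanted onto $\mathcal{G}_H$; the extra wrinkle specific to our setting---that the phase-2 endpoints are the highway nodes nearest to $s$ and $t$ rather than $s$ and $t$ themselves---is absorbed harmlessly by the $O(\sqrt{k})$ slack already charged in phases 1 and 3.
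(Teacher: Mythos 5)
Your proposal matches the paper's proof essentially step for step: the paper also decomposes routing into the three phases you describe, proves the $O(\sqrt{k})$-known / $O(k)$-unknown bounds for reaching and leaving the highway (\Cref{lem:khrouting_to_highway}), establishes the $\Theta(\log n_H)$ normalization constant on $\mathcal{G}_H$ (\Cref{lem:kh_norm}), and then boosts the per-step halving probability to $\Theta(\min(1,k/\log n))$ via independence of the $Qk$ long-range draws (\Cref{lem:kh_halving}), concluding with $O(\log n)$ phases at $O(\max(1,\log(n)/k))$ expected hops each. The only cosmetic difference is that the paper derives the $\min$-shaped halving bound via a Taylor expansion of $1-e^{-Qk/\log n}$ rather than stating it directly, and you make the AM--GM absorption of the $\log n$ term into $k + \log^2(n)/k$ explicit where the paper leaves it implicit.
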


Reassuringly, both results are consistent with the original Kleinberg model when
$k$ is constant, with the expected routing time being $\mathcal{O}(\log^2 n)$.
Our key observation, however, is that the expected routing time reaches a global
minimum when $\Theta(\log{n}) \leq k \leq \Theta(\log^2{n})$ when the
positioning of the highway is known, or just when $k \in \Theta(\log{n})$ in
general, in which case the expected routing time becomes $\mathcal{O}(\log{n})$,
as shown visually in \Cref{fig:paramk}. 
This is a major improvement over the
original Kleinberg model.
\section{Randomized Highway}

The key difference between this model and the Kleinberg highway model is that in
this model highway nodes are distributed randomly through the entire graph
$\mathcal{G}$ instead of the unrealistic expectation that they are distributed
perfectly uniformly. As in the previous model, nodes are laid out in an $n
\times n$ grid with wrap-around, where each node is connected to its 4 directly
adjacent neighbors. Each node independently becomes a highway node with
probability $1/k$ for $1 \leq k \leq n^2/\log{n}$ such that there are an
expected $\Theta(n^2/k)$ highway nodes total w.h.p., and each highway node adds
$Q \times k$ long-distance connections to other highway nodes such there is an
expected average of $Q$ long-distance connections per node w.h.p.\footnote{This
holds for $k \in o(n^2/\log{n})$ when $k \in \Theta(n^2/\log{n})$, the density
is at most $\alpha Q$ w.h.p. for a large enough constant $\alpha$.}. As before,
each highway node only considers other highway nodes as candidates for
long-distance connections, and the probability that highway node $u$ picks
highway node $v$ as a long-distance connection is proportional to $d(u,
v)^{-2}$. An important difference, however, is that there is no clear notion of
local connections between highway nodes in this graph, which will affect the
decentralized greedy routing results. See \Cref{fig:rh}.

\begin{figure}[t]
	\centering
	\begin{tikzpicture}[
	c/.style={circle, draw=okabe1},
	h/.style={circle, draw=okabe1, fill=okabe2},
	l/.style={draw=okabe3,thick},
	lr/.style={draw=okabe4,thick,dashed,-latex},
]
	\pgfmathsetmacro{\n}{9}
	\pgfmathsetmacro{\k}{3}
	\pgfmathtruncatemacro{\npo}{\n + 1}

	\def\highwaylist{1/1,1/6,2/8,6/4,6/9,7/2,7/7,9/5,9/9}

	\foreach \x in {1,...,\n} {
		\foreach \y in {1,...,\n} {
			\pgfmathtruncatemacro{\isHighway}{0} 

			\foreach \hx/\hy in \highwaylist {
				\ifnum\hx=\x
					\ifnum\hy=\y
						\gdef\isHighway{1} 
						\breakforeach 
					\fi
				\fi
			}
			\ifnum\isHighway=1
				\node[h] (\x-\y) at (0.5*\x, 0.5*\y) {};
			\else
				\node[c] (\x-\y) at (0.5*\x, 0.5*\y) {};
			\fi
		}
	}

	\foreach \x in {0,...,\npo} {
		\node (\x-0) at (0.5*\x, 0) {};
		\node (\x-\npo) at (0.5*\x, 0.5*\npo) {};
	}

	\foreach \y in {1,...,\n} {
		\node (0-\y) at (0, 0.5*\y) {};
		\node (\npo-\y) at (0.5*\npo, 0.5*\y) {};
	}

	\foreach \x in {1,...,\n} {
		\foreach \y in {1,...,\n} {
			\pgfmathtruncatemacro{\xpo}{\x + 1}
			\pgfmathtruncatemacro{\ypo}{\y + 1}

			\draw (\x-\y) -- (\x-\ypo);
			\draw (\x-\y) -- (\xpo-\y);
		}
	}

	\foreach \i in {1,...,\n} {
		\draw (\i-0) -- (\i-1);
		\draw (0-\i) -- (1-\i);
	}

	\draw[lr] (1-1) to (7-2);
	\draw[lr] (1-6) to (9-5);
	\draw[lr] (2-8) to [bend right=15] (6-4);
	\draw[lr] (6-4) to [bend left=15] (1-6);
	\draw[lr] (6-9) to (2-8);
	\draw[lr] (7-2) to (6-4);
	\draw[lr] (7-7) to (6-9);
	\draw[lr] (9-5) to (7-2);
	\draw[lr] (9-9) to [bend left=25] (2-8);
\end{tikzpicture}
	\vspace*{-0.25cm}
	\caption{\label{fig:rh}An example of the randomized highway model with $n =
		9$, $k = 9$, and $Q = 1/9$. The solid black and curved solid blue lines
		represent local connections for the entire grid. In this model, there
		are no local connections for the highway subgraph. The value of $Q$ was
		picked such that each highway node has only one long-range connection
		(represented by the dashed light green directed lines) to make the graph
		less cluttered. If $Q$ were 1, each highway node would have 9 long-range
		connections.}
\end{figure}
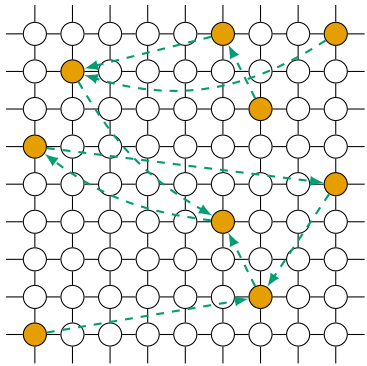

\subsection{Results}

As before, we split our decentralized routing algorithm into three steps:
reaching a highway node from $s$, traversing the highway, and reaching $t$ from
the highway. While traversing the highway, we will only take local connections
that improve our distance to $t$ by at least $4 \sqrt{k}$, for reasons that will
be clear from the proof of \Cref{lem:rh_improved_z}. We will show that the
expected time to reach a highway node from $s$ is $\mathcal{O}(k)$, the expected
time to traverse the highway is $\mathcal{O}(\log^2{n})$ for $k \in
o(\log{n})$ or $\mathcal{O}(\log{n})$ w.h.p. for $k \in \Omega(\log{n})$, and
the expected time to reach $t$ from the highway is $\mathcal{O}(k + \log{n})$
w.h.p. From these results, we will obtain:

\begin{theorem}\label{thm:rh_routing} For $k \in o\left (
	\frac{\log{n}}{\log{\log{\log{n}}}} \right )$, the expected decentralized
	greedy routing path length is $~\mathcal{O}(\log^2{n})$, while for
	$~\Theta\left ( \frac{\log{n}}{\log{\log{\log{n}}}} \right ) \leq k <
	\Theta(\log{n})$, the expected decentralized greedy routing path length is
	$\mathcal{O}(\log^2(n)/k)$, and finally for $\Theta(\log{n}) \leq k
	\leq \Theta(n)$, the expected decentralized greedy routing path length is
	$\mathcal{O}(k)$. Finally, for $k \in \Omega(n)$, the expected path length
	is $\mathcal{O}(n)$.
\end{theorem}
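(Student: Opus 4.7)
The plan is to bound each of the three phases of the decentralized algorithm separately and then combine them by case analysis on $k$. Let $H_1$, $H_2$, $H_3$ denote the number of hops in Phase~1 (reach a highway node from $s$), Phase~2 (greedy route on the highway toward $t$), and Phase~3 (reach $t$ from the highway exit point), respectively; the total routing length is $H_1 + H_2 + H_3$ and we want to control its expectation.

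For Phase~1, I would have the walker follow local edges along a fixed direction, e.g.\ the one pointing toward $t$. Because each visited vertex is independently a highway node with probability $1/k$, the number of hops until the first highway node is a geometric random variable with mean $k$, yielding $\E{H_1} = \mathcal{O}(k)$ and $H_1 = \mathcal{O}(k + \log n)$ w.h.p.\ by a standard tail bound. Phase~3 is handled symmetrically: by stopping Phase~2 as soon as no long-range connection from the current highway node improves distance to $t$ by at least $4\sqrt{k}$, the exit point lies within $\mathcal{O}(\sqrt{k})$ of $t$, and the same geometric argument gives $H_3 = \mathcal{O}(k + \log n)$ w.h.p.

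The main technical content lies in Phase~2, where I would adapt Kleinberg's doubling-distance argument to the random highway. Routing is partitioned into $\mathcal{O}(\log n)$ super-phases during each of which the current distance to $t$ halves. Within a super-phase at current distance $r \geq 4\sqrt{k}$, the rule that only local steps improving distance by at least $4\sqrt{k}$ are taken guarantees that after $\mathcal{O}(\sqrt{k})$ local hops the walker reaches a new highway node, and the random long-range connection from that node has the standard Kleinberg probability $\Theta(1/\log n)$ of landing inside the target ball of radius $r/2$. This already gives an $\mathcal{O}(\sqrt{k}\log n)$ bound in expectation for Phase~2; for $k \in \Omega(\log n)$ a union bound across super-phases tightens this to $\mathcal{O}(\log n)$ w.h.p., and for small $k$ one recovers Kleinberg's $\mathcal{O}(\log^2 n)$.

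The hardest part will be pinning down the intermediate regime and its boundary $\Theta(\log n / \log\log\log n)$, which I expect arises from the interaction between the number of long-range trials needed per super-phase, the density of highway neighbours visible in an $\mathcal{O}(\sqrt{k})$-hop window, and the concentration required to take a union bound over all $\Theta(\log n)$ super-phases: the exponent $\log\log\log n$ typically comes out of asking for $\Theta(\log \log n)$ independent successes per super-phase in order to survive a union bound with probability $1 - n^{-\Omega(1)}$. Once the three phase lemmas are in place, the four regimes in the theorem follow by taking the dominant term among $H_1$, $H_2$, $H_3$: $H_2$ dominates for small $k$, the $H_1 + H_3 = \mathcal{O}(k)$ contribution dominates for moderate $k$, and the trivial $\mathcal{O}(n)$ diameter bound takes over once $k \in \Omega(n)$.
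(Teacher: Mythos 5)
Your phase decomposition and the bounds for Phases 1 and 3 match the paper's, but your Phase 2 analysis has multiple substantive errors and misses the two mechanisms that actually drive the theorem.

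First, you assign each highway node a distance-halving probability of $\Theta(1/\log n)$, ``the standard Kleinberg probability.'' But in this model each highway node has $Q k$ long-range connections (not $Q$), so the halving probability is $\Theta(\min(1, k/\log n))$; this amplification is precisely what makes the highway useful, and omitting it already puts your bound a factor of $k$ off. Second, you assert that a walker who has fallen off the highway recovers a highway node after $\mathcal{O}(\sqrt{k})$ local hops. Since each vertex is a highway node independently with probability $1/k$, a local walk hits one after $\mathcal{O}(k)$ hops, not $\mathcal{O}(\sqrt{k})$; the quantity $4\sqrt{k}$ in the paper is the minimum distance threshold imposed on long-range jumps to make the tighter normalization bound $z'$ apply independently across consecutive highway nodes (\crefpart{lem:rh_nested_lattice}{lem:rh_const_upper}), not a time-to-return bound. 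As a result your claimed Phase~2 bound $\mathcal{O}(\sqrt{k}\log n)$ is wrong, and the arithmetic does not reproduce any of the four regimes in the theorem. With your two errors corrected, the naive ``fall off and walk back'' strategy gives $\Theta(\log^2 n / k)$ highway visits times $\Theta(k)$ local hops each, i.e.\ $\mathcal{O}(\log^2 n)$ for \emph{all} $k$, which cannot produce the intermediate regime $\mathcal{O}(\log^2 n / k)$ for $\Theta(\log n / \log\log\log n) \leq k < \Theta(\log n)$.

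The missing idea is that the randomized highway has \emph{no local connections among highway nodes}, so the entire difficulty is whether the walker can avoid ever falling off. The paper proves that for $k \in \Omega(\log n / \log\log\log n)$, w.h.p.\ every highway node visited has \emph{some} long-range connection that improves the distance to $t$ by at least $4\sqrt{k}$, so the walker never leaves the highway; this is established via a ``dead end'' probability bound of order $d^{-Q/(9z)}$, which in turn relies on the w.h.p.\ upper bound on the normalization constant $z$ (\Cref{lem:rh_norm_loose}). The boundary $\log n / \log\log\log n$ is not a union-bound artifact, as you guessed; it is exactly the crossover in that lemma where the $\Theta(\log\log\log n)$ term (from the nearby-node contribution to $z$, bounded via balls of radius $3\sqrt{k\log\log n}$) stops being dominated by the $\Theta(\log n / k)$ term. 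Without this machinery---the two-tier normalization bounds $z$ and $z'$, the independence structure from \Cref{lem:rh_nested_lattice}, and the dead-end analysis---your sketch cannot reach the stated conclusion.
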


Note that importantly, the results of \Cref{thm:rh_routing} are worse than the
results of \Cref{thm:kh_routing} for values of $k$ between $\Theta(1)$ to
$o\left ( \frac{\log{n}}{\log{\log{\log{n}}}} \right )$, and for values of $k$
greater than $\Theta(n)$. This can be attributed to two facts, the first being
that the location of the closest highway node to $s$ is not known, and the
second being that there is no notion of local connections between the highway
nodes.
Furthermore, these bounds are not high probability bounds, since without
knowledge of the highway nodes' locations, there is a constant probability
($\approx e^{-c}$) of not reaching the highway within $c \times k$ steps, for
any constant $c$.
Allowing any non-highway node to know the location of the closest highway node
(adding a constant amount of information per node) would improve the time to get
onto the highway to at most $\mathcal{O}(\sqrt{k \log{n}})$ w.h.p. for any node,
a result derived from \Crefpart{lem:rh_nested_lattice}{lem:rh_n_lower}, but is a
variant which we do not consider further.

\subsection{Greedy Routing Sketch}\label{sec:rhsketch}

Proving the expected decentralized greedy routing path length results for the
randomized highway model in \Cref{thm:rh_routing} follows similar steps to the
proof for the Kleinberg highway model in \Cref{thm:kh_routing}.
We include a sketch below, leaving the complete proofs for the appendix in \Cref{subsec:rh_proofs}.

We start by proving a lower bound on the probability that a
long-range connection exists between two arbitrary highway nodes. In order to do
this, we need to find a high probability upper bound on the normalization
constant $z$ for any arbitrary highway node.

\begin{lemma}\label{lem:rh_norm_loose} The normalization constant $z$ for any
	arbitrary highway node is at most $25 \log{\log{\log{n}}} + \frac{41}{9}
	\frac{\log{n}}{k \log{\log{n}}} + 26\frac{\log{n}}{k}$ for $n > 5$ w.h.p.
	(for at most $\mathcal{O}(\log^2{n})$ invocations).
\end{lemma}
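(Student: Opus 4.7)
The plan is to decompose $z = \sum_{v \in H \setminus \{u\}} d(u,v)^{-2}$ by dyadic annuli around the fixed highway node $u$. For $i = 1, 2, \ldots, \lceil \log_2(n/2) \rceil$ I set $A_i = \{v : 2^{i-1} \leq d(u,v) < 2^i\}$ and let $X_i$ be the number of highway nodes in $A_i$. Since $|A_i| = \Theta(4^i)$ and each non-$u$ lattice point is independently a highway node with probability $1/k$, the mean is $\mu_i := \mathbb{E}[X_i] = \Theta(4^i/k)$, while every $v \in A_i$ contributes at most $4/4^i$ to $z$. Thus $z \leq 4 \sum_i X_i / 4^i$, with expectation $\Theta(\log n / k)$, already matching the dominant term $26 \log n / k$ of the claimed bound.

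To upgrade this to a w.h.p.\ statement, I would bound each $X_i$ separately, splitting the index range into three regimes. For the \emph{large} bands where $\mu_i \geq c \log n$ (equivalently $4^i \gtrsim k \log n$), multiplicative Chernoff gives $X_i \leq 2 \mu_i$ with failure probability at most $n^{-c/3}$; each such band contributes $O(1/k)$ to $z$, and the $O(\log n)$ of them sum to the $\Theta(\log n / k)$ term. For the \emph{transition} bands where $\mu_i$ lies between $\Theta(\log n / \log\log n)$ and $\Theta(\log n)$, a slightly looser multiplicative Chernoff still gives $X_i = O(\log n)$, and the per-band contribution $O(\log n / 4^i)$ summed geometrically is the source of the $\frac{41}{9} \cdot \frac{\log n}{k \log\log n}$ correction. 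For the \emph{small} bands where $\mu_i \ll 1$, I would apply the Poisson-type tail bound $\Pr[X_i \geq t] \leq (e\mu_i / t)^t$ with a threshold $t \approx \log n / \log(1/\mu_i)$ chosen to drive the per-band failure probability below $n^{-c}$; because $\mu_i$ is tiny and the weights $4/4^i$ decay geometrically, the total contribution of these bands telescopes to the $O(\log\log\log n)$ additive residual.

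A union bound over the $O(\log n)$ annuli and the $O(\log^2 n)$ invocations incurs only a $\mathrm{poly}(\log n)$ penalty in the failure probability, which I would absorb by choosing the Chernoff constants large enough so that each band fails with probability $\leq n^{-c}$ for sufficiently large $c$. The main obstacle is pinning down the small-band regime: one cannot afford to bound $X_i$ by its (near-zero) mean, yet a single distance-one highway node alone contributes $1$ to $z$, so the occasional ``lucky'' close highway node threatens to dominate the sum. The right balance comes from picking the threshold $t$ just large enough that $t \log(t / e\mu_i) \geq c \log n$ — large enough to kill the tail, small enough that $\sum t \cdot 4/4^i$ over the small bands collapses to the stated $\log\log\log n$ additive term, whose constant $25$ is the residue of that geometric-sum arithmetic.
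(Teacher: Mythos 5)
Your annulus decomposition is in the same spirit as the paper's nested-lattice construction, but your handling of the small bands has a genuine flaw, and it is precisely the regime you yourself flag as the ``main obstacle.'' You propose to drive every band's failure probability down to $n^{-c}$ by choosing $t$ with $t\log(t/(e\mu_i)) \gtrsim c\log n$, and then claim the resulting $\sum_i 4t_i/4^i$ over the small bands ``collapses to the stated $\log\log\log n$ additive term.'' It does not. Take $i=1$ (the four distance-one neighbors of $u$), where $\mu_1 = 4/k$ and each highway node contributes $1$ to $z$. To make $\Pr[X_1 \geq t] \leq n^{-c}$ via the Poisson-type tail you need $t\log(tk/(4e)) \gtrsim \log n$, which for $k = O(\mathrm{poly}\log n)$ forces $t = \Theta(\log n/\log\log n)$. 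That single band then contributes $\Theta(\log n/\log\log n)$ to $z$, and the same calculation makes the whole small-band sum $\Theta(\log n/\log k)$. At the regime that matters most for the paper, $k = \Theta(\log n)$, this is $\Theta(\log n/\log\log n)$, which is far larger than $25\log\log\log n + \frac{41}{9}\frac{\log n}{k\log\log n} = O(\log\log\log n)$. So the plan, carried out as written, cannot reach the claimed bound.

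The paper escapes this exactly by \emph{not} demanding failure probability $n^{-c}$ near $u$. Its inner-region estimate (\Crefpart{lem:rh_nested_lattice}{lem:rh_logn_upper}) bounds the number of highway nodes in a ball of radius $3\sqrt{k\log\log n}$ by $41\log\log n$ only \emph{with high probability in $\log n$} (failure $\approx \log^{-3.89}n$ per ball), and then argues this suffices because routing only invokes the bound $O(\log^2 n)$ times --- this is precisely the meaning of the parenthetical ``(for at most $\mathcal{O}(\log^2 n)$ invocations)'' in the lemma statement. With that weaker guarantee the worst-case packing of $41\log\log n$ highway nodes around $u$ yields $z_{0,\text{inner}} \leq 4\mathcal{H}(\sqrt{41\log\log n}+1) < 25\log\log\log n$, and the remaining $\leq 41\log n$ nodes in the same big ball, pushed to the inner radius $3\sqrt{k\log\log n}$, yield $z_{0,\text{outer}} < \frac{41}{9}\frac{\log n}{k\log\log n}$. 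If you keep your dyadic decomposition, you must make the analogous move: use a $\log^{-\Omega(1)}n$ (not $n^{-\Omega(1)}$) failure probability for the bands with $4^i \lesssim k\log\log n$, union-bounded over the $O(\log^2 n)$ invocations, and your thresholds $t_i$ then drop to $O(\log\log n)$ in the inner region, recovering the $O(\log\log\log n)$ term. Your large-band and transition-band analysis is sound and matches the paper's $z_{>0}$ and $z_{0,\text{outer}}$ contributions.
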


This result gives us a normalization constant that is in
$\mathcal{O}(\log(n)/k)$ for $k \in o\left
(\frac{\log{n}}{\log{\log{\log{n}}}}\right )$, and in
$\mathcal{O}(\log{\log{\log{n}}})$ for $k \in \Omega\left
(\frac{\log{n}}{\log{\log{\log{n}}}}\right )$. Note that this bound is worse for
large values of $k$ than the bound we obtained for the Kleinberg highway model
in \Cref{lem:kh_norm}. We can, however, improve this bound, but without the same
high probability guarantees:

\begin{lemma}\label{lem:rh_norm_tight} The normalization constant $z$ for any
	arbitrary highway node is at most $10 + 37 \frac{\log{n}}{k}$ for $n > 2$
	with probability at least $1/2$. From now on, we will refer to this tighter
	bound as $z'$.
\end{lemma}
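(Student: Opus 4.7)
The starting observation is that, conditional on $u$ being a highway node, the normalization constant
\[
z = \sum_{v \in H \setminus \{u\}} d(u, v)^{-2}
\]
is a weighted sum of independent indicators: each other grid node $v$ is a highway node with probability $1/k$, independently. Since we only need the bound to hold with probability at least $1/2$, Markov's inequality applied to $\E{z}$ is enough, and we can sidestep the concentration argument required for the w.h.p.\ bound in \Cref{lem:rh_norm_loose}.

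First, I would compute
\[
\E{z} = \frac{1}{k} \sum_{v \neq u} d(u, v)^{-2}
\]
by linearity of expectation. To bound the deterministic sum, I would group nodes by their lattice distance from $u$: in the $n \times n$ wrap-around grid, the number of nodes at $\ell_1$ distance exactly $d$ from $u$ is at most $4d$ (exactly $4d$ for $d < n/2$, fewer beyond), so
\[
\sum_{v \neq u} d(u, v)^{-2} \;\leq\; \sum_{d=1}^{n} \frac{4d}{d^2} \;=\; 4 H_n \;=\; \mathcal{O}(\log n),
\]
giving $\E{z} = \mathcal{O}(\log n / k)$. Markov's inequality then yields $z \leq 2\,\E{z}$ with probability at least $1/2$.

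To match the explicit form $10 + 37 \log n / k$, I would tighten the harmonic estimate using $H_n \leq \ln n + 1$ and, if needed, split the distance sum at a small constant threshold $d_0$, handling the near contribution (bounded in expectation by a constant) and the far tail (bounded by the sharpened harmonic estimate) separately. After doubling via Markov, the constants $10$ and $37$ are chosen large enough to absorb the resulting leading term and the additive slack; the hypothesis $n > 2$ simply ensures $\log n > 0$ so that the two terms in the stated bound are positive.

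The main obstacle is purely constant-chasing: achieving exactly $10$ and $37$ requires careful bookkeeping around the harmonic sum, the wrap-around distance formula, and the choice of split threshold. Conceptually, nothing beyond linearity of expectation and one application of Markov's inequality is required, which is precisely why this bound is quantitatively tighter than \Cref{lem:rh_norm_loose} while only holding with constant probability rather than w.h.p.
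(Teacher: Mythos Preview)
Your Markov-inequality argument is correct and in fact simpler than the paper's: since $z$ is a nonnegative random variable with $\E{z}=\tfrac{1}{k}\sum_{v\neq u}d(u,v)^{-2}\le \tfrac{4H_n}{k}$, Markov gives $z\le 2\E{z}=\mathcal{O}(\log n/k)$ with probability at least $1/2$, which already implies the stated bound $10+37\log n/k$. (Your bound carries no additive constant at all, so the ``split at a threshold $d_0$'' you propose is not needed to recover the paper's constants.)

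The paper takes a different route. It reuses the nested-lattice decomposition from the proof of \Cref{lem:rh_norm_loose}: the far contribution $z_{>0}<26\log n/k$ and the outer-shell contribution $z_{0,\text{outer}}<\tfrac{41}{4}\log n/k$ are controlled by the w.h.p.\ global bound of \Crefpart{lem:rh_nested_lattice}{lem:rh_n_upper}, and the only piece that is \emph{not} w.h.p.\ is $z_{0,\text{inner}}$, the contribution of highway nodes inside the small ball $B_{2\sqrt{k}}(u)$. The probability-$1/2$ guarantee comes entirely from \Crefpart{lem:rh_nested_lattice}{lem:rh_const_upper}, which caps that small ball at $18$ highway nodes and hence $z_{0,\text{inner}}<4\mathcal{H}(5)<10$.

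What the paper's route buys---and what your Markov argument does not---is \emph{locality of the success event}. In the paper's proof, conditioned on the global w.h.p.\ event, whether $z\le z'$ holds at $u$ depends only on the highway nodes inside $B_{2\sqrt{k}}(u)$; consequently the success events are independent for highway nodes at mutual distance at least $4\sqrt{k}$. This independence is precisely what the proof of \Cref{lem:rh_improved_z} invokes to run a Chernoff bound along the routing path. Under your approach the event ``$z\le 2\E{z}$'' depends on the entire highway configuration, so the events at different $u$ are correlated and that Chernoff step would not go through. Your proof of the lemma as literally stated is fine; just be aware that if you are writing up the whole section, the downstream argument needs the paper's localized decomposition (or some other source of independence).
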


This improved bound gives us a normalization constant that is in
$\mathcal{O}(\log(n)/k)$ for $k \in o(\log{n})$ in $\mathcal{O}(1)$ for $k \in
\Omega(\log{n})$, a result in line with the Kleinberg highway model. We want to
be able to use this improved bound when calculating the probability of halving
the distance to the destination.

\begin{lemma}\label{lem:rh_improved_z} Using the improved normalization constant
	bound $z'$ incurs at most a constant factor to the probability of halving
	the distance to the destination while routing w.h.p.
\end{lemma}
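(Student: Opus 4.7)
The plan is to separate the two sources of randomness cleanly. The random placement of highway nodes determines $z_u$ for every highway node $u$, while conditional on that placement each node independently samples its $Qk$ long-range edges. The goal is to show that substituting $z'$ for $z_u$ in the analysis yields a lower bound on the halving probability that is tight up to a constant factor w.h.p.\ across the entire routing process.

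First I would fix a highway node $u$ with target $t$ and set $S_u = \sum_{v \in G_u} d(u,v)^{-2}$, where $G_u$ is the set of ``good'' highway nodes that halve the distance from $u$ to $t$. The true probability that a single long-range edge from $u$ halves the distance is $S_u/z_u$, while the $z'$-based analysis would yield $S_u/z'$. By Lemma~\ref{lem:rh_norm_tight}, the event $\{z_u \leq z'\}$ holds with probability at least $1/2$ over the highway placement, so in expectation over the placement, $\mathbb{E}[S_u/z_u] \geq S_u/(2z')$; that is, replacing $z_u$ by $z'$ loses at most a factor of two in the per-node halving probability on average.

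Next, view routing as a sequence of trials at highway nodes $u_1,\ldots,u_T$. Conditional on the highway placement, each node independently samples its long-range edges, so the indicators $\mathbf{1}[\text{halve at step } i]$ are conditionally independent Bernoulli variables with parameter $\asymp Qk\cdot S_{u_i}/z_{u_i}$. A Chernoff bound conditional on the placement gives that the total number of halvings concentrates around $\sum_i Qk\cdot S_{u_i}/z_{u_i}$; taking expectation over the placement and using the per-node bound above, this is $\Omega(T Qk S/z')$. Thus $O(\log n / (QkS/z'))$ routing steps suffice to produce the $\Omega(\log n)$ halvings needed to reach $t$, matching the bound one would naively compute using $z'$ up to constants.

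The main obstacle is handling the dependence of the $z_{u_i}$ across different $i$, since all of them are derived from the same random highway placement; a straightforward Chernoff on $\sum_i 1/z_{u_i}$ is not available. My plan is to exploit the fact that halving forces consecutive $u_i$ to be separated by a constant factor in geographic scale, so the dominant ``close'' contributions to $z_{u_i}$ come from essentially disjoint neighborhoods, enabling an Azuma-type martingale argument along the routing path. Where the martingale argument is not tight, I would fall back on the w.h.p.\ worst-case bound from Lemma~\ref{lem:rh_norm_loose} to cap any outlier $z_{u_i}$, so that ``bad'' steps are paid for by the constant fraction of ``good'' steps and the overall overhead remains a constant factor.
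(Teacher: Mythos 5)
Your high-level plan—separate the randomness into highway placement vs.\ edge sampling, then argue that a constant fraction of visited highway nodes enjoy the improved bound $z'$—is the right shape, but it misses the specific mechanism that makes the paper's argument work, and the mechanism you propose in its place is both unnecessary and not actually available.

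The key enabling step in the paper is not a martingale. It is that the event ``$z_{u} \leq z'$'' depends only on the highway placement inside $B_{2\sqrt{k}}(u)$ (this is the $z_{0,\mathrm{inner}}$ term of Lemma~\ref{lem:rh_norm_tight}; the $z_{>0}$ and $z_{0,\mathrm{outer}}$ terms already hold w.h.p.\ for \emph{all} nodes simultaneously). By \Crefpart{lem:rh_nested_lattice}{lem:rh_const_upper}, these events are literally mutually \emph{independent} across nodes whose pairwise lattice distance exceeds $4\sqrt{k}$. The routing algorithm is then deliberately modified to guarantee this separation: it only takes local hops that improve distance by at least $4\sqrt{k}$, so the sequence of visited highway nodes is automatically $4\sqrt{k}$-separated. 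With independence in hand, a plain Chernoff bound on the number of visited nodes where $z'$ holds gives a constant fraction w.h.p., and the lemma follows with no martingale at all. You attribute the geographic separation to ``halving,'' but halving is not what provides it: between successful halvings the router takes many local hops, and without the $4\sqrt{k}$-step constraint consecutive visited highway nodes could be arbitrarily close, destroying independence. Because you do not identify that algorithmic constraint, your proposed Azuma argument has no clean bounded-difference structure to stand on—you would need exactly the disjoint-neighborhood independence that the paper makes explicit, at which point the martingale machinery is superfluous.

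You also omit the case split that the paper uses to avoid needing the separation argument at all when $k \in o(\log n / \log\log\log n)$: in that regime $z$ and $z'$ are already within a constant factor of each other (compare Lemma~\ref{lem:rh_norm_loose} with Lemma~\ref{lem:rh_norm_tight}), so nothing needs to be proved. Your fallback to the worst-case $z$ bound for ``outlier'' steps gestures in this direction, but without the case split it is not clear the fallback is tight enough in the regime where the two bounds genuinely differ. Finally, a small technical point: bounding $\mathbb{E}[S_u/z_u]$ from below is not by itself what you want, since concentration is required, not just a favorable expectation—the paper sidesteps this entirely by counting \emph{nodes} where $z'$ applies rather than averaging $1/z_u$.
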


Now we can use these improved normalization constant bounds to find the
probability of halving our distance. Suppose we are in phase $j$ where $\log(c(k
+ \log{n})) \leq j \leq \log{n}$ (for some constant $c$ we will discuss later),
and the current message holder $u$ is a highway node. Let us find the
probability that we have a long-range contact that is in a better phase. First,
we find the number of highway nodes in a better phase than us, i.e., within the
ball of radius $2^j$ around $t$ ($B_{2^j}(t)$).

\begin{lemma}\label{lem:rh_ballj} There are at least $2^{2j - 2} / k$ highway
	nodes in a ball of radius $2^j$ for $\log(c(k + \log{n})) \leq j \leq
	\log{n}$ with high probability (with probability at least $1 -
	n^{-0.18c^2}$).
\end{lemma}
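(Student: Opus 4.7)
The plan is to apply a multiplicative Chernoff bound to the (binomial) number $X$ of highway nodes falling inside the fixed ball $B_{2^j}(t)$.

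First, I would lower bound the size of $B_{2^j}(t)$. On the $n \times n$ wrap-around grid with $L_1$ distance, for $r = 2^j \leq n/2$ the ball has exactly $2r^2 + 2r + 1$ lattice points, so in particular $|B_{2^j}(t)| \geq 2 \cdot 2^{2j}$. (The boundary case $2^j$ close to $n$ only increases the count, so we can assume this form.) Because each node becomes a highway node independently with probability $1/k$, the variable $X$ is $\mathrm{Binomial}(|B_{2^j}(t)|, 1/k)$, and hence $\mu := \mathbb{E}[X] \geq 2 \cdot 2^{2j}/k$. The target $2^{2j-2}/k$ is exactly $\mu/8$, so we need to bound a one-sided $(1-1/8)\mu$ deviation below the mean.

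Next, I would invoke the Chernoff bound $\Pr[X \leq (1-\delta)\mu] \leq \exp(-\delta^2 \mu / 2)$ with $\delta = 7/8$, giving a failure probability of the form $\exp(-\Omega(\mu))$. To convert this to $n^{-\Omega(c^2)}$, I need $\mu = \Omega(c^2 \log n)$ throughout the permitted range of $j$. This is where the lower bound $j \geq \log\bigl(c(k+\log n)\bigr)$ is used: it gives $2^{2j} \geq c^2(k+\log n)^2$, and by AM-GM $(k+\log n)^2 \geq 4 k \log n$, so
\[
\mu \;\geq\; \frac{2 \cdot 2^{2j}}{k} \;\geq\; \frac{2 c^2 (k+\log n)^2}{k} \;\geq\; 8 c^2 \log n.
\]
Plugging this back yields a failure probability of at most $\exp\bigl(-\Theta(c^2 \log n)\bigr)$, and a careful choice of the Chernoff constants (or equivalently the sharper form $\Pr[X \leq (1-\delta)\mu] \leq \exp(-\mu((1-\delta)\ln(1-\delta)+\delta))$) pushes the exponent into the claimed $n^{-0.18 c^2}$ regime.

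The main obstacle is simply the bookkeeping of constants: the stated exponent $0.18 c^2$ is tighter than what the loosest multiplicative Chernoff inequality gives, so I would need to track the constant in the exponent carefully, possibly switching to the sharper divergence-form Chernoff bound and absorbing conversions between $\ln$ and $\log_2$ into the leading constant. The probabilistic argument itself is a straightforward concentration statement on a deterministic geometric region, and no dependence between distinct highway-node indicators needs to be handled.
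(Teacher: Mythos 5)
Your proposal is the same argument the paper uses: treat the number $X$ of highway nodes in the fixed geometric ball $B_{2^j}(t)$ as a binomial with mean $\mu$, apply a multiplicative lower-tail Chernoff bound, and use the lower bound $j \geq \log\bigl(c(k+\log n)\bigr)$ together with an AM-GM step to force $\mu = \Omega(c^2\log n)$. The paper's only differences are in the constants: it cites Kleinberg's weaker count $|B_{2^j}(t)| > 2^{2j-1}$ so that $\mu \geq 2^{2j-1}/k$, and then applies Chernoff with $\delta = 1/2$ (so the target $\mu/2 = 2^{2j-2}/k$), giving $\Pr(X \leq \mu/2) \leq e^{-\mu/8} \leq n^{-c^2/(8\ln 2)} < n^{-0.18c^2}$; your concern that $0.18c^2$ might be out of reach with the basic Chernoff bound is unfounded, and in fact your tighter ball count ($\geq 2\cdot 2^{2j}$) with $\delta = 7/8$ overshoots the claimed exponent. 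One genuine issue in your write-up: the remark that ``the boundary case $2^j$ close to $n$ only increases the count'' is false on the wrap-around torus, where $|B_r|$ saturates at $n^2$; at $j$ near $\log n$ you would have $2\cdot 2^{2j} > n^2 \geq |B_{2^j}(t)|$, so your stated lower bound on $\mu$ fails there. The paper's looser count $2^{2j-1} \leq n^2/2$ does hold up to $j = \log n$, which is precisely why it pairs that count with $\delta = 1/2$; if you keep your tighter count you should either restrict to $2^j \leq n/2$ and handle the remaining few phases trivially (near-maximal $j$ gives $\Omega(n^2/k)$ highway nodes anyway), or simply use $\min(2\cdot 2^{2j},\, n^2/2)$ as the ball-size lower bound.
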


Each of these nodes has lattice distance less than $2^{j + 2}$, allowing us to
bound the probability of them being a specific long-range contact of $u$. Then,
we can obtain an identical result (in asymptotic notation) to the result in
\Cref{lem:kh_halving}:

\begin{lemma}\label{lem:rh_halving} In the randomized highway model, the
    probability that a node $u$ has a long-range connection to a node $v$ that
    halves its distance to the destination is proportional to at least
    $k/\log{n}$ for $k \in \mathcal{O}(\log{n})$ and is constant for $k \in
    \Omega(\log{n})$.
\end{lemma}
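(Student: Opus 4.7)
The plan is to follow the standard Kleinberg-style phase argument: lower bound the probability that a single long-range edge out of the current highway holder $u$ lands in the ``better phase'' ball around $t$, and then aggregate over the $Qk$ independent long-range contacts that $u$ has.

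First I would fix $u$ and $t$, and consider a phase $j$ with $d(u,t) \in [2^j, 2^{j+1}]$ and $\log(c(k+\log n)) \leq j \leq \log n$, so that \Cref{lem:rh_ballj} applies. Any node $v \in B_{2^j}(t)$ then satisfies $d(u,v) \leq 2^{j+2}$, hence $d(u,v)^{-2} \geq 2^{-(2j+4)}$. Combining \Cref{lem:rh_ballj}, which w.h.p.\ supplies at least $2^{2j-2}/k$ highway candidates inside $B_{2^j}(t)$, with the tightened normalization bound $z'$ of \Cref{lem:rh_norm_tight}, which by \Cref{lem:rh_improved_z} may be substituted for $z$ at the cost of only a constant factor, the probability that a single long-range edge from $u$ goes to $B_{2^j}(t) \cap \mathcal{G}_H$ is at least
\[
 p_1 \;\geq\; \frac{(2^{2j-2}/k)\cdot 2^{-(2j+4)}}{z'} \;=\; \Omega\!\left(\frac{1}{k\, z'}\right).
\]

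Next I would split on the regime of $k$. For $k \in \mathcal{O}(\log n)$, \Cref{lem:rh_norm_tight} gives $z' = \mathcal{O}(\log n / k)$, so $p_1 = \Omega(1/\log n)$; since $u$ draws $Qk$ independent long-range contacts, the probability that at least one halves the distance is at least $1 - (1 - p_1)^{Qk} \geq 1 - e^{-Qk p_1} = \Omega(k/\log n)$, using $1-e^{-x} \geq x/2$ for small $x$. For $k \in \Omega(\log n)$, the same lemma gives $z' = \mathcal{O}(1)$ and thus $p_1 = \Omega(1/k)$, whence $1 - e^{-Qk p_1} \geq 1 - e^{-\Omega(1)} = \Omega(1)$, yielding the claimed constant lower bound.

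The main obstacle I anticipate is bookkeeping of the probabilistic conditioning: \Cref{lem:rh_norm_tight} holds only with probability $1/2$, \Cref{lem:rh_ballj} holds w.h.p., and the $Qk$ long-range trials must remain essentially independent of both events. The delicate part is confirming that conditioning on the event $\{z \leq z'\}$ does not bias the individual edge probabilities beyond the constant factor already absorbed by \Cref{lem:rh_improved_z}; once that is checked, combining the two high-probability events by a union bound is routine. A minor side condition is that $u$ itself must be a highway node, but this is guaranteed by the routing procedure at the point when the lemma is invoked, so it does not complicate the argument.
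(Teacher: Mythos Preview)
Your proposal is correct and follows essentially the same approach as the paper: bound a single long-range contact's chance of landing in $B_{2^j}(t)$ via \Cref{lem:rh_ballj}, the distance bound $d(u,v) < 2^{j+2}$, and the tightened normalization $z'$ (justified through \Cref{lem:rh_improved_z}), then aggregate over the $Qk$ independent contacts using $1-(1-p_1)^{Qk}\leq 1-e^{-Qkp_1}$. The only cosmetic difference is that the paper keeps the unified bound $p_1 > [64k\,z']^{-1}$ with $z' = 10 + 37\log n/k$ and only splits regimes at the very end via $1 - e^{-Qk/[2368(k+\log n)]}$, whereas you split on the regime of $k$ one step earlier when simplifying $z'$; both routes yield the same conclusion.
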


Once we reach phase $j = \log(c(k + \log{n}))$, we are at distance
$\mathcal{O}(k + \log{n})$ from the destination, reaching it in $\mathcal{O}(k + \log{n})$ local hops. As stated up until now, we would be
able to perform greedy routing with results equivalent to those of
\Cref{thm:kh_routing} assuming no knowledge about the positioning of the highway
nodes ($\mathcal{O}(k + \log^2(n)/k)$ routing). 
However, we have not yet
addressed the elephant in the room: the fact that there is no notion of local
contacts between highway nodes. In simple terms, while routing, if there are no
long-range contacts that improve your distance, you must leave the highway. And
when you leave the highway, it may take a while to get back onto it. We will
show that this is not a problem for large values of $k$, i.e. values of $k \in
\Omega(\log{n})$, but for smaller values of $k$ the bound will be worse than
before, becoming $\mathcal{O}(\log^2{n})$ expected routing instead of
$\mathcal{O}(\log^2(n)/k)$ (note that we do not prove that the bound is tight).
In \Cref{sec:rhvariant} we propose a variant which trivially achieves the improved $\mathcal{O}(\log^2(n)/k)$ expected routing for small values of $k$. We consider this 
variant slightly less elegant, and since it maintains the same optimal results, we do not consider it further.
\section{Windowed Neighborhood Preferential Attachment}\label{sec:wnpa}

Our previous models have a binary distinction between highway nodes and normal
nodes, represented by a fixed value of $k$. We now describe a new model with a
continuous transition, where each node picks its own value of $k$, such that the
distribution of the values of $k$, and consequently the degree distribution,
exhibits a power law. Each node independently picks their probability $k$ from a
distribution $\Pr(k) \propto 1/k^{2+\epsilon}$ for $\epsilon > 0$. Each node $u$
then adds $\epsilon Q \times k$ long-range connections, but only to nodes within
a given range, or ``window'', of popularity. Specifically, let the window of
popularity for a given node $u$ with popularity $k_u$ be popularities in the
range $[k_u/A, A k_u]$.

\subsection{Results}

While at first glance this model may seem irreconcilable from the previous models, consider referring to all nodes with popularity $\log{n} \leq k \leq A \log{n}$ as the ``highway''. We expect to have $\mathcal{O}(1 / \log^{1 + \epsilon}{n})$ highway nodes. Ignoring all long-range connections that do not connect two highway nodes, we find an instance of the randomized highway model embedded within the windowed NPA model, albeit with a small (but nevertheless constant) value of $Q$. With these key observations, we are able to prove:

\begin{theorem}\label{thm:bak_routing}
	The windowed NPA model has a decentralized greedy algorithm that routes in
	expected $\mathcal{O}(\log^{1 + \epsilon}(n))$ hops.
\end{theorem}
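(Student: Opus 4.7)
The plan is to reduce the analysis to an application of \Cref{thm:rh_routing} by identifying a randomized-highway-like subgraph embedded in the windowed NPA model. Specifically, I would designate as \emph{highway nodes} those nodes whose popularity satisfies $k \in [\log n, A \log n]$. Integrating the power-law distribution $\Pr(k) \propto k^{-(2+\epsilon)}$ over this interval shows that each node becomes a highway node independently with probability $\Theta(1/\log^{1+\epsilon}{n})$, so a Chernoff bound yields $\Theta(n^2/\log^{1+\epsilon}{n})$ highway nodes w.h.p. This matches the density of highway nodes in a randomized highway graph with parameter $k_{\mathrm{RH}} = \Theta(\log^{1+\epsilon}{n})$.

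Next, I would verify that the subgraph formed by the long-range edges restricted to highway-to-highway pairs behaves, up to constant factors, like a randomized highway graph with parameter $k_{\mathrm{RH}}$. Three facts are needed. \textbf{(i)} The window $[k_u/A, A k_u]$ of any highway node $u$ always intersects $[\log n, A \log n]$, so highway-to-highway edges are never excluded by the window constraint. \textbf{(ii)} A constant fraction of $u$'s $\epsilon Q k_u = \Theta(\log n)$ long-range connections target highway popularities, since the highway interval accounts for a constant fraction of the $k^{-(2+\epsilon)}$-weighted mass of the window. \textbf{(iii)} Conditioned on the target being a highway node, the distance distribution is (up to constants) proportional to $d(u,v)^{-2}$, matching the randomized highway model.

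With this embedded structure in hand, I would use the three-phase routing algorithm from \Cref{sec:kh} and the randomized-highway section: route from $s$ to the nearest highway node using local connections, traverse the embedded highway using the restricted long-range edges, and finally descend from the highway to $t$. For the middle phase I would invoke \Cref{thm:rh_routing} with $k_{\mathrm{RH}} = \log^{1+\epsilon}{n}$, which lies in the $[\Theta(\log n), \Theta(n)]$ regime and therefore contributes $\mathcal{O}(k_{\mathrm{RH}}) = \mathcal{O}(\log^{1+\epsilon}{n})$ expected hops. The two off-highway phases likewise contribute $\mathcal{O}(k_{\mathrm{RH}}) = \mathcal{O}(\log^{1+\epsilon}{n})$ expected hops each by the same analysis used in the randomized highway model without knowledge of highway locations. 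Summing the three phases yields the claimed bound.

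The main obstacle is step \textbf{(iii)}: showing that restricting to highway targets genuinely preserves the $d^{-2}$ distance distribution up to constant factors. Unlike the randomized highway, the windowed NPA's connection probabilities are normalized over a popularity window whose composition depends on $k_u$, and the normalization constant must be controlled uniformly across highway sources, in the spirit of \Cref{lem:rh_norm_tight} and \Cref{lem:rh_improved_z}. A related subtlety is that the effective per-node long-range degree on the embedded highway is only $\Theta(\log n)$ rather than $\Theta(\log^{1+\epsilon} n)$, so the effective $Q$ decays slowly with $n$; I would expect to re-derive the halving-probability calculation in the embedded setting rather than invoke \Cref{thm:rh_routing} as a pure black box, verifying that a constant (or nearly constant) per-node halving probability still holds so that the $\mathcal{O}(\log^{1+\epsilon}{n})$ phase bound survives.
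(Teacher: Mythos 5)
Your proposal takes essentially the same approach as the paper: both define the highway as the nodes with popularity in $[\log n, A\log n]$, show this set has density $\Theta(1/\log^{1+\epsilon} n)$ and that a constant fraction of each highway node's long-range edges land back on the highway, and then reduce to the randomized highway model with effective parameter $k' = \Theta(\log^{1+\epsilon} n)$. The paper flags the same effective-degree subtlety you do (highway nodes carry only $\Theta(\log n)$ rather than $\Theta(k')$ long-range edges) and, like you, observes that the halving-probability argument must be re-checked in the embedded setting rather than invoked as a black box to confirm the $\mathcal{O}(\log^{1+\epsilon} n)$ total survives.
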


The complete proof for this theorem can be found in \Cref{sec:wnpaproof}. Furthermore, experimental results confirming that this model greedily routes significantly better than Kleinberg's can be found in \Cref{sec:experiments}.

\subsection{Efficient Construction}

The neighborhood preferential attachment model of Goodrich and Ozel
\cite{goodrichModelingSmallworldPhenomenon2022} takes $\mathcal{O}(|V|^2)$ time to construct and
there is no more efficient construction currently known. The windowed NPA model
can similarly be constructed sequentially in $\mathcal{O}(|V|^2)$ time. However,
due to how each node picks their connections independently, this model is
embarrassingly parallel, and can be constructed in $\mathcal{O}(|V|)$
time with $|V|$ processors, without any communication between processors.



\section{Future Work}

It would be interesting to be able to prove whether our bounds are tight for our
models. Specifically, whether the bounds for the randomized highway model can be
improved to be more in line with the Kleinberg highway results%
.
While the diameter of models with constant degree is
at least 
$\Omega(\log{n})$, there is no such lower bound when dealing with constant \emph{average} degree. It would be interesting to either bridge the gap or show that a true gap exists between the lower bound on the diameter of our networks, $\Omega(\log{n}/\log{\log{n}})$, and the upper bound on greedy routing, $\mathcal{O}(\log{n})$.
Also, it would be
interesting to prove whether it is possible to achieve a greedy routing time of
$\log{n} + \mathbf{\sqrt{k}}$ for larger values of $k$ if each node knows the
location of the nearest highway node
(a constant amount of
additional
information). This result would improve the expected running time of the
windowed NPA model to just $\mathcal{O}(\log{n})$ for $0 < \epsilon \leq 1$.
Finally, our analysis for the randomized highway model depends on the network
having a mostly even spread of nodes. Experimentally, both our model and the original NPA model
perform worse on Alaska, a highly unevenly spread out state. It would be
interesting to generalize our results if some form of density condition is met.

\nocite{*}
\bibliographystyle{splncs04}
\bibliography{refs}

\clearpage

\section{Appendix}

\subsection{Experimental Analysis}\label{sec:experiments}

Goodrich and Ozel's paper on the neighborhood preferential model
\cite{goodrichModelingSmallworldPhenomenon2022} was able to show that a hybrid
model combining elements from Kleinberg's model with preferential attachment is
able to outperform both individual models for decentralized greedy routing on
road networks by showing many experimental results. In the previous sections,
we provided some theoretical justification for their results, by proving
asymptotically better greedy routing times for a similar model. In this section,
we complete our comparisons by reproducing their key experimental results with
our new model. Our experimental framework is nearly identical to theirs, except
that we implement directed versions of each algorithm, i.e. where each
long-range connection is directed (local connections are by definition always
undirected). This allows us to run experiments much more efficiently---we sample
between 30,000 to 200,000 source/target pairs for each data point, as compared
to their 1,000 pairs---but results in all algorithms having a worse performance.
For our experiments we picked $\epsilon = 0.5$ and $A = 1.01$. It is possible
that other parameters would yield better results.

\subsubsection{Key Results}

\begin{figure}[b!]
    \centering
    \includegraphics[width=\columnwidth]{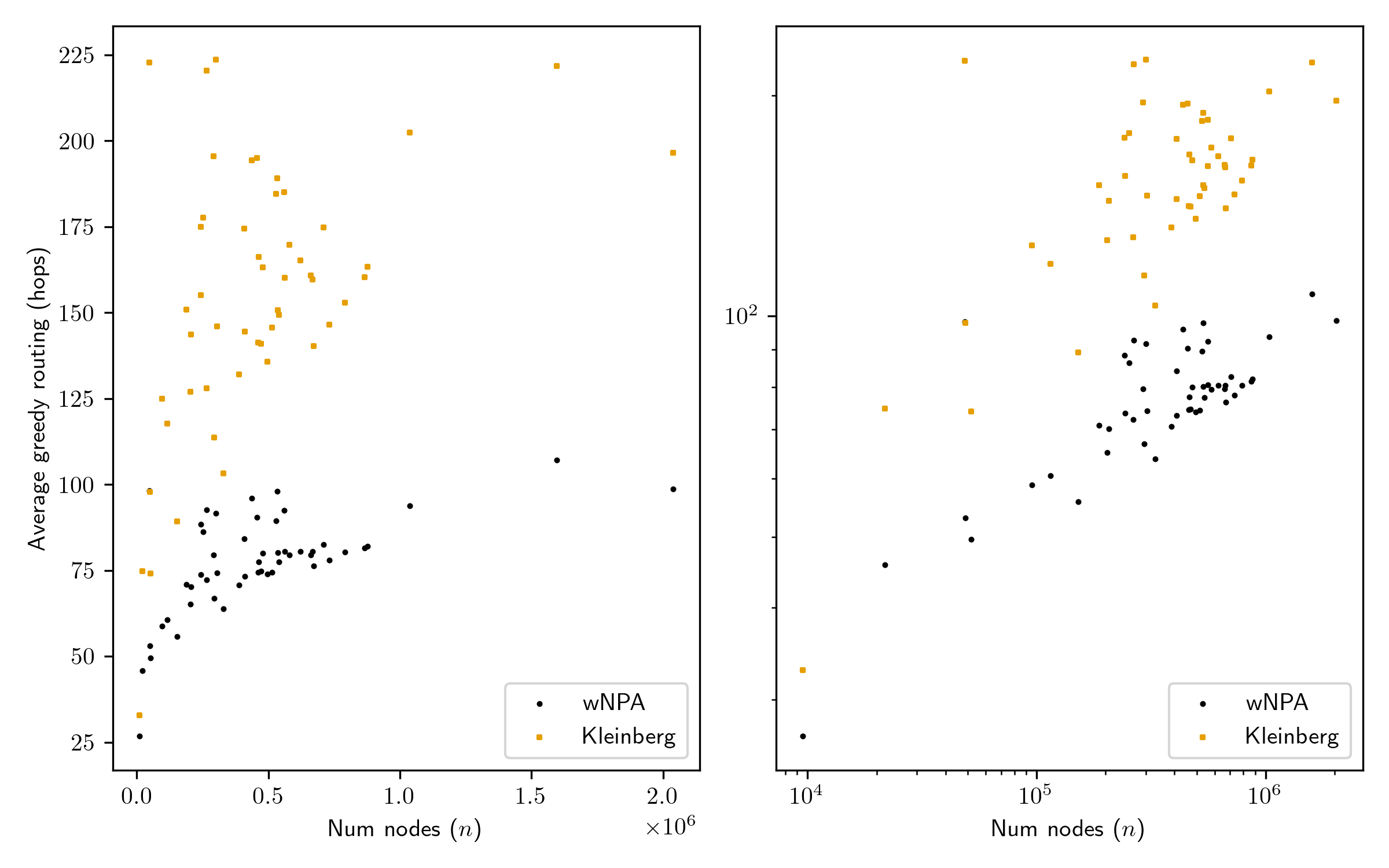}
    \caption{ Comparison of greedy routing times for Kleinberg's model and the
        windowed NPA model when $Q = 1, \epsilon = 0.5, A = 1.01$. The right plot is in
        log scale. }
    \label{fig:kleinberg_comp}
\end{figure}

Our main key result is that our windowed NPA model outperforms Kleinberg's model for road
networks by a factor of 2, as shown in \Cref{fig:kleinberg_comp}. This result is
directly in line with Goodrich and Ozel's experimental results with their
similar model \cite{goodrichModelingSmallworldPhenomenon2022}. It is worth
mentioning that our directed version of the model is worse than the undirected
version from Goodrich and Ozel's paper by roughly a factor of 2.

Similarly, we show that by increasing the degree density to 32 we can achieve a
result of less than 20 degrees of separation, which again is roughly twice the
results from Goodrich and Ozel's paper (see
\Cref{fig:greedy_routing_average_degree}), which we attribute primarily to the
directed implementation of the models for our experiments.

\begin{figure}[t!]
    \centering
    \includegraphics[width=\columnwidth]{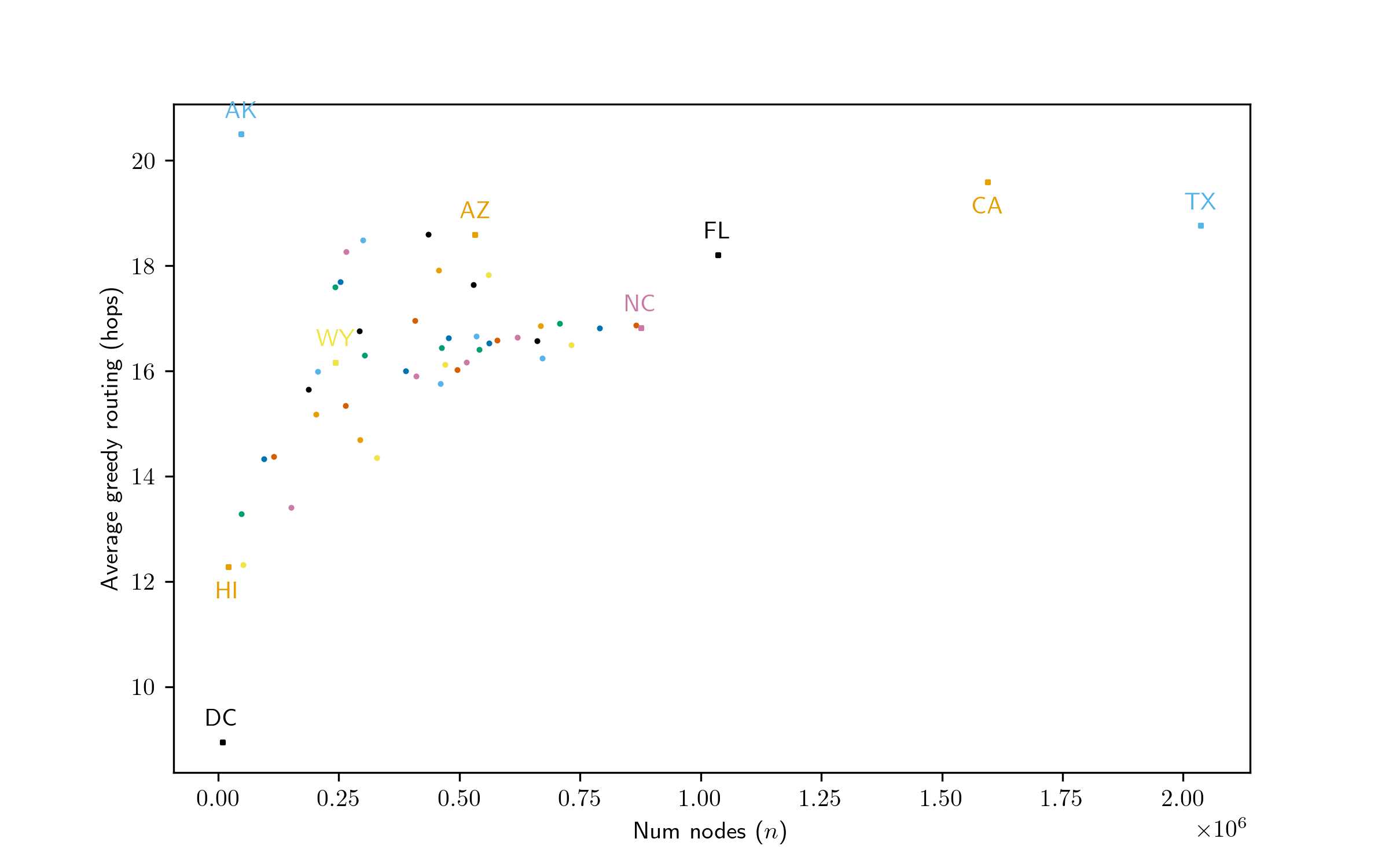}
    \caption{ The greedy routing times for the windowed NPA model on the 50 US states
        when $Q = 32$, $\epsilon = 0.5$, and $A = 1.01$. }
    \label{fig:greedy_routing_average_degree}
\end{figure}

\subsection{Kleinberg Highway Proofs}\label{sec:khp}

In this section, we prove \Cref{thm:kh_routing} by proving upper bounds on each of the three steps of the greedy routing algorithm: routing from $s$ to the highway using local connections, within the highway towards $t$ using standard Kleinberg routing, and finally from the highway to $t$ again using local connections.

\begin{lemma}\label{lem:khrouting_to_highway} It is possible to route from any
	node $s \in \mathcal{G}$ to a highway node $h \in \mathcal{G}_H$ in at most
	$\sqrt{k}$ hops, if the location of $h$ is known, or in at most $k - 1$
	hops, if the location of $h$ is not known.
\end{lemma}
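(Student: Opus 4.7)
The plan is to leverage the regular spacing of the highway subgrid. Since $\mathcal{G}_H$ is an $n_H \times n_H$ subgrid of $\mathcal{G}$ with $n_H = n/\sqrt{k}$, highway nodes are located exactly at the lattice points of a uniform $\sqrt{k}$-spaced sublattice of $\mathcal{G}$. This structural fact drives both bounds.

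For the case when the location of the closest highway node $h$ is known, I would observe that every node $s \in \mathcal{G}$ lies inside a $\sqrt{k} \times \sqrt{k}$ cell of the highway sublattice whose four corners are highway nodes. Writing $s$'s coordinates modulo $\sqrt{k}$ as $(a, b)$ with $0 \le a, b < \sqrt{k}$, the nearest corner has lattice distance $\min(a, \sqrt{k} - a) + \min(b, \sqrt{k} - b) \le \sqrt{k}$. Since $s$ knows the positioning of the highway, it can pick the closest corner and simply walk toward it along the local edges of $\mathcal{G}$, one axis-aligned step at a time, arriving in at most $\sqrt{k}$ hops.

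For the case when the location of $h$ is unknown, the idea is to force a discovery by sweeping out a contiguous $\sqrt{k} \times \sqrt{k}$ region around $s$. Because the highway sublattice has period $\sqrt{k}$ in each coordinate, any axis-aligned half-open $\sqrt{k} \times \sqrt{k}$ window of $\mathcal{G}$ contains exactly one highway node, regardless of its alignment. Therefore, starting from $s$ and performing a standard snake traversal over any $\sqrt{k} \times \sqrt{k}$ block containing $s$ visits $k$ distinct nodes using $k - 1$ local edges, and at least one of these $k$ nodes must be a highway node. The routing terminates as soon as such a node is encountered, so the hop count is at most $k - 1$.

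The main subtlety is not computational but combinatorial: making sure the exploration truly covers a full $\sqrt{k} \times \sqrt{k}$ window even though the snake starts at an arbitrary interior point $s$. I would handle this by letting $s$ commit to one of the (at most four) axis-aligned $\sqrt{k} \times \sqrt{k}$ blocks that contain it and performing a boustrophedon traversal of that block; neighboring cells in such a traversal are always adjacent in $\mathcal{G}$, so each step is a valid local hop. The $\sqrt{k}$ bound in the known case and the $k - 1$ bound in the unknown case then follow directly, without relying on any property of the long-range connections.
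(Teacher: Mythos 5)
Your proposal is correct and follows essentially the same strategy as the paper: both exploit the $\sqrt{k}$-periodic sublattice structure of $\mathcal{G}_H$, bound the known-case distance by summing per-axis residues $\min(\cdot,\sqrt{k}-\cdot) \le \sqrt{k}/2$, and handle the unknown case by a boustrophedon sweep of a $\sqrt{k}\times\sqrt{k}$ block that necessarily contains a highway node. The only difference is that you are slightly more explicit about why the snake traversal can be launched validly from $s$ (by choosing a block with $s$ as a corner), a detail the paper leaves implicit.
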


\begin{proof}
	Without loss of generality, let's assume highway nodes are located wherever
    $\bmod(x, \sqrt{k}) = 0$ and $\bmod(y, \sqrt{k}) = 0$. Then, the maximum
    distance in the $x$ dimension to a highway node is $\delta_x = \min(\bmod(x,
    \sqrt{k}), \sqrt{k} - \bmod(x, \sqrt{k})) = \left \lfloor \frac{\sqrt{k}}{2}
    \right \rfloor$, and an equivalent result holds for $\delta_y$. Therefore,
    the maximum lattice distance to a highway node is the sum of both, or at
    most $2 \left \lfloor \frac{\sqrt{k}}{2} \right \rfloor \leq \sqrt{k}$. If
    the location of $h$ is known, then we can route to it directly taking a
    number of hops equal to the lattice distance to $h$. If the location of $h$
    is not known, we can visit every node in a $\sqrt{k} \times \sqrt{k}$
    square, guaranteeing that we will encounter a highway node $h$, in $k - 1$
    hops.
\end{proof}

After we reach the highway subgraph $\mathcal{G}_H$, we can use the standard
Kleinberg routing algorithm towards $t$. As in Kleinberg's original analysis, we
first prove a lower bound on the probability that a long-range connection exists
between two arbitrary highway nodes.

\begin{lemma}\label{lem:kh_norm} The normalization constant $z$ for
	$\mathcal{G}_H$ is upper bounded by $z \leq 4 \ln(6 n_H) \leq 4 \ln(6 n)$.
	As such, the probability of any two highway nodes $u$ and $v$ being
	connected is at least $[4 \ln(6 n) d_H(u, v)^2]^{-1}$, where $d_H(u, v)$ is
	the lattice distance between $u$ and $v$ in $\mathcal{G}_H$.
\end{lemma}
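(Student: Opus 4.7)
The plan is to bound the normalization constant $z$ directly by the standard shell-counting argument used in Kleinberg's original paper, specialized to the highway subgraph $\mathcal{G}_H$. By construction, $\mathcal{G}_H$ is an $n_H \times n_H$ wrap-around grid on which each highway node picks long-range endpoints with probability proportional to $d_H(u,v)^{-2}$, so the normalization constant is exactly
\[
z \;=\; \sum_{v \in \mathcal{G}_H,\, v \neq u} d_H(u,v)^{-2}.
\]
My goal is to bound this sum from above by $4\ln(6n_H)$, which then yields the probability bound automatically as $\Pr[u \to v] = d_H(u,v)^{-2}/z \geq [4 \ln(6n)\, d_H(u,v)^2]^{-1}$.

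First I would group highway nodes by their lattice distance $\ell$ from $u$. On an $L_1$ grid with wrap-around, the sphere at lattice distance exactly $\ell$ contains at most $4\ell$ nodes (this is the usual diamond-perimeter count, and wrap-around only decreases the count for large $\ell$). Since the maximum lattice distance in an $n_H \times n_H$ wrap-around grid is at most $n_H$, I can write
\[
z \;\leq\; \sum_{\ell=1}^{n_H} (4\ell) \cdot \ell^{-2} \;=\; 4 \sum_{\ell=1}^{n_H} \frac{1}{\ell}.
\]
The harmonic sum satisfies $\sum_{\ell=1}^{n_H} 1/\ell \leq 1 + \ln n_H$, and since $1 + \ln n_H = \ln(e \cdot n_H) \leq \ln(6 n_H)$ for all $n_H \geq 1$, this gives $z \leq 4\ln(6n_H) \leq 4\ln(6n)$, as claimed.

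The probability lower bound follows immediately: for any two highway nodes $u$ and $v$, the probability that a particular long-range trial from $u$ selects $v$ is $d_H(u,v)^{-2}/z$, which by the bound on $z$ is at least $[4\ln(6n)\, d_H(u,v)^2]^{-1}$. There is no real obstacle here; the only care needed is verifying the sphere-size bound $4\ell$ under wrap-around (where it can only be smaller than in the infinite grid) and checking the constant in $\ln(6n_H)$ absorbs the $+1$ from the harmonic bound, both of which are routine.
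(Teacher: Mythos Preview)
Your proposal is correct and is exactly the approach the paper intends: the paper's own proof simply says ``This result follows directly from Kleinberg's original analysis on the highway subgraph $\mathcal{G}_H$,'' and you have written out that standard shell-counting argument explicitly. Your verification of the constant via $1+\ln n_H = \ln(e\,n_H) \le \ln(6n_H)$ is fine, so there is nothing to add.
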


\begin{proof}
	This result follows directly from Kleinberg's original analysis on the
	highway subgraph $\mathcal{G}_H$.
\end{proof}

In Kleinberg's analysis, the probability that a node $u$ has a long-range
connection to a node $v$ that halves its distance to the destination is
proportional to $[\log{n}]^{-1}$, when a node has a constant number of
long-range connections $Q$. In our case, each highway node has $Q \times k$
long-range connections, where $k$ does not need to be constant. This gives us
improved distance-halving probabilities:

\begin{lemma}\label{lem:kh_halving} In the Kleinberg highway model, the
	probability that a node $u$ has a long-range connection to a node $v$ that
	halves its distance to the destination is proportional to at most
	$k/\log{n}$ for $k \in \mathcal{O}(\log{n})$ and is constant for $k \in
	\Omega(\log{n})$.
\end{lemma}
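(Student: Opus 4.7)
The plan is to adapt Kleinberg's classical distance-halving argument to the highway subgraph $\mathcal{G}_H$, exploiting the fact that each highway node now draws $Qk$ independent long-range connections rather than just $Q$. Throughout, distances are measured by $d_H(\cdot,\cdot)$ inside $\mathcal{G}_H$, and $u$ denotes the current highway message holder at distance $d := d_H(u,t)$ from the target $t$ (assumed on the highway for this phase).

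First, I would identify the halving ball $B := B_{d/2}(t) \cap \mathcal{G}_H$, which contains $\Theta(d^2)$ highway nodes, each at lattice distance at most $3d/2$ from $u$. By \Cref{lem:kh_norm}, a single long-range endpoint from $u$ equals any fixed $v \in B$ with probability at least $1/(4\ln(6n)\cdot d_H(u,v)^2) \geq 1/(4\ln(6n)\cdot (3d/2)^2)$. Summing over the $\Theta(d^2)$ nodes of $B$ shows that any single long-range connection from $u$ lands inside $B$ with probability $p \geq c/\log n$ for some absolute constant $c > 0$.

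Next, since the $Qk$ long-range connections of $u$ are drawn independently, the probability that at least one of them falls in $B$ is $1 - (1-p)^{Qk}$. In the regime $k \in \mathcal{O}(\log n)$, Bernoulli's inequality gives $1-(1-p)^{Qk} \geq Qkp = \Omega(k/\log n)$. In the regime $k \in \Omega(\log n)$, we instead use $(1-p)^{Qk} \leq e^{-pQk} \leq e^{-\Omega(Qk/\log n)}$, which is bounded above by a constant strictly less than one, yielding a halving probability of $\Omega(1)$. These two cases together match the statement of the lemma.

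The main (and relatively mild) obstacle will be verifying that the analysis remains sound at the extremes of the parameter range. When $k$ is very close to $n^2$ the highway collapses ($n_H \to 1$) and $\log n_H$ is no longer $\Theta(\log n)$, but in that regime the highway phase is trivial and the overall routing time is dominated by the local phases bounded in \Cref{lem:khrouting_to_highway}. One must also check that for $d$ near the wrap-around diameter the ball $B$ still contains $\Theta(d^2)$ highway nodes and not fewer due to overlap, which is automatic under the wrap-around grid assumption in the preliminaries. Apart from these boundary accommodations, the argument reduces entirely to Kleinberg's standard ball-counting combined with independence of the $Qk$ long-range draws.
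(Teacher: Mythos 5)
Your approach mirrors the paper's: bound the single-connection halving probability at $p = \Omega(1/\log n)$ via the normalization constant of \Cref{lem:kh_norm}, then exploit the independence of the $Qk$ long-range draws through $1 - (1-p)^{Qk}$. The paper simply cites Kleinberg's ball-counting for the first step where you reconstruct it explicitly; otherwise the structure is essentially identical.

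There is, however, one concrete error in your $k \in \mathcal{O}(\log n)$ case. You assert that Bernoulli's inequality gives $1 - (1-p)^{Qk} \geq Qkp$, but Bernoulli's inequality says $(1-p)^{Qk} \geq 1 - Qkp$, which yields the \emph{upper} bound $1 - (1-p)^{Qk} \leq Qkp$ (this is just the union bound). The routing argument in \Cref{thm:kh_routing} requires a \emph{lower} bound on the halving probability, since the expected number of hops per phase is its reciprocal, so this direction is the wrong one. The fix is routine: combine $(1-p)^{Qk} \leq e^{-pQk}$ with $1 - e^{-x} \geq x/2$ for $0 \leq x \leq 1$, which gives $1 - (1-p)^{Qk} \geq pQk/2 = \Omega(k/\log n)$ whenever $pQk = \mathcal{O}(1)$; the paper reaches the same conclusion via a Taylor expansion of $e^{-Qk/\log n}$. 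With that one step repaired, your argument agrees with the paper's proof.
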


\begin{proof}
	Following Kleinberg's analysis, the probability that a single long-range
	connection from $u$ halves its distance to the destination is still
	proportional to $[\log{n}]^{-1}$. Therefore, the probability that a single
	long-range connection does \textit{not} halve its distance to the
	destination is proportional to $1 - [\log{n}]^{-1}$. The probability that
	all $Qk$ long-range connections do not halve the distance is therefore
	proportional to $\left (1 - [\log{n}]^{-1} \right )^{Qk} = \left [ \left (1
	- [\log{n}]^{-1} \right )^{\log{n}} \right ]^{\frac{Qk}{\log{n}}} \leq
	e^{-\frac{Qk}{\log{n}}}$. Finally, the probability that any one of the $Qk$
	succeed in halving the distance is therefore proportional to $1 -
	e^{-\frac{Qk}{\log{n}}}$. When $k \in \omega(\log{n})$, the exponential term
	tends towards zero, and the probability tends towards one. For smaller
	values of $k$, a Taylor expansion of $e^{-\frac{Qk}{\log{n}}}$ shows that
	this probability is proportional to at least $1 - \left [ 1 -
	\frac{Qk}{\log{n}} + \mathcal{O}\left ( \left [ \frac{Qk}{\log{n}} \right
	]^2 \right ) \right ] = \frac{Qk}{\log{n}} - \mathcal{O}\left ( \left [
	\frac{Qk}{\log{n}} \right ]^2 \right )$. When $k \in o(\log{n})$, the lower
	order terms become asymptotically negligible, and we are left with a
	probability proportional to $\frac{Qk}{\log{n}} = \mathcal{O}(k/\log{n})$.
	When $k = \Theta(\log{n})$, we are left with a constant dependent on $Q$.
\end{proof}

Importantly, this result reproduces Kleinberg's original result when $k$ is
constant, since we are left with a probability proportional to $1/\log{n}$.
Finally, we can prove the main result of this section:

\begin{proof}[of \Cref{thm:kh_routing}]
	It is possible to describe the greedy routing path in terms of at most
	$\log{n}$ phases, where a node $u$ in phase $j$ if it is at a lattice
	distance between $2^j$ and $2^{j+1}$ from the destination $t$. It is easy to
	see that halving the distance to the destination results in reducing what
	phase a node is in by one. The expected amount of hops spent in each phase
	is therefore $1 / \Pr(\text{distance halving}) = \mathcal{O}(\log(n)/k)$.
	Note that importantly, when no long-range connections halve the distance, we
	take local connections on the \textit{highway graph} towards $t$, as in the
	original Kleinberg model. Since there are at most $\log{n}$ phases, we
	expect to spend at most $\mathcal{O}(\log{n}(\log(n)/k + 1))$ hops on the
	highway\footnote{Some minor details regarding the final $\log{\log{n}}$
	phases are omitted for brevity.}. Finally, the final highway node is known
	to be at most $\sqrt{k}$ hops away from the destination $t$. The theorem
	follows from these results along with the results from
	\Cref{lem:khrouting_to_highway}.
\end{proof}

\subsection{Randomized Highway Proofs}\label{subsec:rh_proofs}

We now present proofs of theorems and lemmas discussed in \Cref{sec:rhsketch}.

\subsubsection{The Nested Lattice Construction}

For our proofs, similarly to the Kleinberg highway model, we will conceptually
subdivide the highway into a lattice of balls of various sizes (see
\Cref{fig:nestedlattice} for an example nested lattice structure), and show
upper and lower bounds on the number of highway nodes within each ball with
varying degrees of probability bounds. Specifically we will prove:

\begin{lemma}\label{lem:rh_nested_lattice} Results from the nested lattice
	structure:
	\begin{enumerate}
		\item All balls of radius $3 \sqrt{k \log{n}}$, centered around any of
		the $n^2$ nodes, contain at least $9 \log{n}$ highway nodes with high
		probability in $n$. \label{lem:rh_n_lower}

		\item All balls of radius $3 \sqrt{k \log{n}}$, centered around any of
		the $n^2$ nodes, contain fewer than $41 \log{n}$ highway nodes with high
		probability in $n$. \label{lem:rh_n_upper}

		\item $\mathcal{O}(\log^2{n})$ balls of radius $3 \sqrt{k
		\log{\log{n}}}$, centered \\ around any $\mathcal{O}(\log^2{n})$ nodes,
		contain fewer than $41 \log{\log{n}}$ highway nodes with high
		probability in $\log{n}$. \label{lem:rh_logn_upper}

		\item Any arbitrary ball of radius $2 \sqrt{k}$ has at most 18 highway
		nodes with probability at least $1/2$. This result is \textit{not} a
		high probability bound, and is only independent for balls centered
		around nodes with lattice distance greater than $4 \sqrt{k}$ between
		them. \label{lem:rh_const_upper}
	\end{enumerate}
\end{lemma}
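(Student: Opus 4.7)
The plan is to treat each part as a concentration bound on a binomial random variable counting highway nodes in a ball, then apply the appropriate union bound. In the $\ell_1$ metric on the grid, a ball $B_r(u)$ contains $2r^2 + 2r + 1$ nodes, so for radius $3\sqrt{k\log n}$ we have roughly $18k\log n$ nodes, giving an expected number of highway nodes $\mu \approx 18\log n$; for radius $3\sqrt{k\log\log n}$ we get $\mu \approx 18\log\log n$; and for radius $2\sqrt{k}$ we get $\mu \approx 8$. Since each node is independently a highway node with probability $1/k$, the count $X$ of highway nodes in any fixed ball is $\mathrm{Binomial}(\text{ball size},1/k)$ and Chernoff-type bounds apply directly.

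For Part~\ref{lem:rh_n_lower}, I would use the multiplicative Chernoff lower tail $\Pr[X \le (1-\delta)\mu] \le \exp(-\delta^2\mu/2)$ with $\delta = 1/2$ and $\mu \approx 18\log n$ to get a bound of the form $n^{-c}$ for a constant $c > 2$, then union-bound over all $n^2$ centers. For Part~\ref{lem:rh_n_upper}, I would symmetrically apply the upper-tail Chernoff bound $\Pr[X \ge (1+\delta)\mu] \le \exp(-\delta^2\mu/(2+\delta))$ with $\delta$ chosen so that $(1+\delta)\mu = 41\log n$; the constant $41$ is picked precisely so the resulting exponent beats the $n^2$ union bound. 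For Part~\ref{lem:rh_logn_upper}, the same upper-tail Chernoff is applied but now with $\mu \approx 18\log\log n$, so the failure probability per ball is of the form $(\log n)^{-c}$, and the union bound is only over $\mathcal{O}(\log^2 n)$ centers, yielding high probability in $\log n$.

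Part~\ref{lem:rh_const_upper} is different in character: since $\mu \approx 8$ is only a constant, no Chernoff bound can give a high-probability guarantee, so I would instead invoke Markov's inequality, $\Pr[X \ge 18] \le \mu/18 < 1/2$. The independence caveat follows from the observation that two $\ell_1$-balls of radius $2\sqrt{k}$ are disjoint whenever their centers lie at lattice distance strictly greater than $4\sqrt{k}$, so the Bernoulli indicators determining highway membership in the two balls act on disjoint sets of nodes and are therefore independent.

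The only real subtlety I foresee is matching the stated constants. The ball-size approximation $2r^2 + O(r)$ introduces lower-order slack that must be absorbed into the constants $9$, $41$, and $18$; one must also check that the Chernoff exponent in Part~\ref{lem:rh_n_upper} actually exceeds $2\ln n$ after subtracting the $\log(n^2)$ from the union bound, which is the tightest of the four parts. Once those constant computations are carried out, everything else is a routine application of binomial concentration and union bounds.
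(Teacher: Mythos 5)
Your proposal follows the paper's proof essentially verbatim for Parts 1--3: model the highway count in each ball as a binomial, apply multiplicative Chernoff bounds, and union-bound over $n^2$ (or $\mathcal{O}(\log^2 n)$) centers; and for Part 4 you correctly identify that independence comes from disjointness of $\ell_1$ balls whose centers are more than $4\sqrt{k}$ apart. The only difference from the paper's bookkeeping is cosmetic: the paper brackets the ball size between $2a^2 k\log n$ and $3a^2 k\log n$ (the upper bound valid for radius $\geq 3$), fixes $\delta = 1/2$, and reads off the constants $9$ and $41$, rather than tuning $\delta$ against an exact $\mu \approx 18\log n$.

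One constant does not quite survive scrutiny as you wrote it. In Part 4, a ball of radius $2\sqrt{k}$ has exactly $8k + 4\sqrt{k} + 1$ nodes, so $\mu = 8 + 4/\sqrt{k} + 1/k$, which exceeds $9.5$ for $k$ up to roughly $16$; then Markov gives $\Pr[X \geq 19] \leq \mu/19 \geq 1/2$ and the claim does not follow. Your remark that ``no Chernoff bound can give a high-probability guarantee'' is true but beside the point --- Chernoff is perfectly serviceable for a constant-probability bound and in fact gives $\Pr[X \geq 19] < 1/2$ uniformly over the relevant range of $\mu$ (for $\mu = 13$ the exponent is already about $-1.1$), so swapping Markov for the same Chernoff inequality you used elsewhere, or handling tiny $k$ by a direct count, closes the gap. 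The paper states Part 4 without proof, so this is not a deviation from the paper so much as a loose end in your version.
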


\begin{figure}
	\centering
	\begin{tikzpicture}[
	d/.style={circle, inner sep=0pt, minimum size=1.5pt, fill=okabe1},
	ctr/.style={circle, inner sep=0pt, minimum size=3pt, fill=okabe2},
	b/.style={draw=okabe3, thick},
	c/.style={draw=okabe4, fill=okabe4, opacity=0.5},
	n/.style={draw=none, pattern=crosshatch, pattern color=okabe5, opacity=0.5},
]

	\pgfmathsetmacro{\w}{29} 
	\pgfmathsetmacro{\h}{19} 
	\pgfmathsetmacro{\s}{0.25} 

	\pgfmathtruncatemacro{\cx}{\w/2+1} 
	\pgfmathtruncatemacro{\cy}{\h/2+1} 

	\pgfmathsetmacro{\br}{3} 
	\pgfmathtruncatemacro{\bd}{2*\br} 

	\foreach \x in {1,...,\w} {
		\foreach \y in {1,...,\h} {
			\coordinate (\x-\y) at (\s*\x, \s*\y);
		}
	}

	\pgfmathtruncatemacro{\nxp}{\cx + \br + \bd}
	\pgfmathtruncatemacro{\nxm}{\cx - \br - \bd}
	\pgfmathtruncatemacro{\nyp}{\cy + \br + \bd}
	\pgfmathtruncatemacro{\nym}{\cy - \br - \bd}

	\pgfmathtruncatemacro{\cxpbr}{\cx + \br}
	\pgfmathtruncatemacro{\cxmbr}{\cx - \br}
	\pgfmathtruncatemacro{\cypbr}{\cy + \br}
	\pgfmathtruncatemacro{\cymbr}{\cy - \br}

	\pgfmathtruncatemacro{\l}{0}
	\pgfmathtruncatemacro{\rx}{\w+1}
	\pgfmathtruncatemacro{\ry}{\h+1}

	\tikzset{
		clip even odd rule/.code={\pgfseteorule}, 
		invclip/.style={
			clip,insert path=
				[clip even odd rule]{
					[reset cm](-\maxdimen,-\maxdimen)rectangle(\maxdimen,\maxdimen)
				}
		}
	}

	\begin{scope}
		\begin{pgfinterruptboundingbox}
			\clip[invclip] (\cx-\cypbr) -- (\cxpbr-\cy) -- (\cx-\cymbr) -- (\cxmbr-\cy) -- cycle;
		\end{pgfinterruptboundingbox}

		\draw[n] (\cx-\nyp) -- (\nxp-\cy) -- (\cx-\nym) -- (\nxm-\cy) -- cycle;
	\end{scope}

	\draw[c] (\cx-\cypbr) -- (\cxpbr-\cy) -- (\cx-\cymbr) -- (\cxmbr-\cy) -- cycle;

	\pgfmathtruncatemacro{\topx}{mod(\cx + \cy + 3, \bd)}
	\pgfmathtruncatemacro{\topxs}{\topx + \bd}

	\pgfmathsetmacro{\min}{0}
	\pgfmathsetmacro{\maxx}{\w + 1}
	\pgfmathsetmacro{\maxy}{\h + 1}
	\pgfmathsetmacro{\max}{\maxx + \maxy}

	\clip (\s*\l, \s*\l) rectangle (\s*\rx, \s*\ry);

	\pgfmathtruncatemacro{\botx}{mod(\cx + \cy + 3, \bd)}
	\pgfmathtruncatemacro{\botxs}{\botx + \bd}

	\foreach \x in {\botx,\botxs,...,\max} {
		\pgfmathsetmacro{\y}{\maxy - \x}

		\draw[b] (\s*\x, \s*\maxy) -- (\s*\min, \s*\y);
	}

	\foreach \x in {\topx,\topxs,...,\max} {
		\draw[b] (\s*\x, \s*\min) -- (\s*\min, \s*\x);
	}

	\foreach \x in {1,...,\w} {
		\foreach \y in {1,...,\h} {
			\ifthenelse{\x = \cx \AND \y = \cy}{
				\node[ctr] at (\x-\y) {};
			}{
				\node[d] at (\x-\y) {};
			}
		}
	}
	
\end{tikzpicture}
	\caption{\label{fig:nestedlattice} The nested lattice construction showing
		balls of radius 3, centered around an orange node. The central ball is
		depicted in solid light green, while the 8 adjacent balls are shown in
		dashed yellow.}
\end{figure}
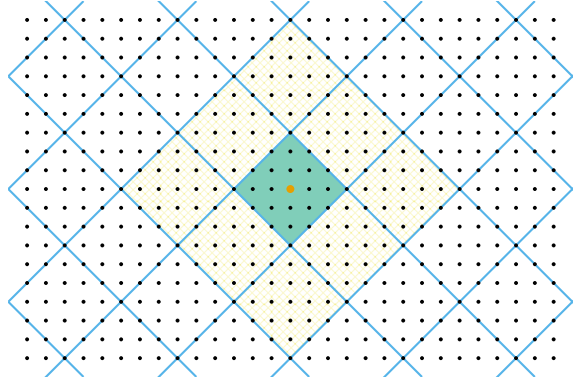

\begin{proof}
	Consider balls of radius $a \sqrt{k \log{n}}$ for some constant $a$. There
	are at least $2 a^2 k \log{n}$-many nodes within each ball of radius $a
	\sqrt{k \log{n}}$. The probability that any node is a highway node is $1/k$,
	so the expected number of highway nodes within each ball is $\mu \ge 2 a^2
	\log{n}$. We can lower bound the number of highway nodes within each ball by
	using a Chernoff bound. Letting $X$ be the number of highway nodes within
	each ball, we have:
	\begin{equation*}
		\Pr(X \leq (1 - \delta)\mu) \leq e^{-\frac{\delta^2 \mu}{2}} = e^{-a^2
		\delta^2 \log{n}} = n^{-\frac{a^2 \delta^2}{\ln{2}}}
	\end{equation*}

	By union bound, the probability this fails for a ball centered at any of the
	$n^2$ vertices is at most $n^{2 - \frac{a^2 \delta^2}{\ln{2}}}$. Setting
	$\delta = 1/2$ and $a = 3$, we obtain that all balls with radius $3 \sqrt{k
	\log{n}}$ have at least $9 \log{n}$ highway nodes with probability at least
	$1 - n^{-1.24}$, which is w.h.p. For an upper bound, we first note that
	there are fewer than $3 a^2 k \log{n}$-many nodes within each ball of radius
	$a \sqrt{k \log{n}}$ for radii of at least 3. Using another Chernoff bound:
	\begin{equation*}
		\Pr(X \geq (1 + \delta)\mu) \leq e^{-\frac{\delta^2 \mu}{2 + \delta}} =
		e^{-\frac{3 a^2 \delta^2 \log{n}}{2 + \delta}} = n^{-\frac{3 a^2
		\delta^2}{\ln{2}(2 + \delta)}}
	\end{equation*}

	By setting $\delta = 1/2$ and $a = 3$, we obtain that all balls with radius
	$3 \sqrt{k \log{n}}$ have fewer than $41 \log{n}$ highway nodes w.h.p. (with
	probability at least $1 - n^{-1.89}$). We can obtain similar bounds for
	smaller balls, although with worse probabilities. For example, for balls of
	radius $a \sqrt{k \log{\log{n}}}$, we expect $\mu < 3 a^2 \log{\log{n}}$
	highway nodes for radii of at least 3. Using another Chernoff bound with
	$\delta = 1/2$ and $a = 3$, we obtain that any given ball with radius $3
	\sqrt{k \log{\log{n}}}$ has more than $41 \log{\log{n}}$ highway nodes with
	probability less than $\log^{-3.89}{n}$. Assuming we will only invoke this
	bound at most $\mathcal{O}(\log^2{n})$ times, the probability that any of
	the invocations fail is negligible (at most $\mathcal{O}(\log^{-1.89}{n})$).
	Finally, we consider balls of radius only $2 \sqrt{k}$, which have at most
	$18$ highway nodes with probability at least $1/2$.
\end{proof}


\subsubsection{Finding the Normalization Constant}

The probability that highway node $u$ picks highway node $v$ as a long-range
connection is $d(u, v)^{-2}/ \left [\sum_{w \neq u}{d(u, w)^{-2}} \right ]$,
where each $w$ in the summation is a highway node. In order to lower bound this
probability, we must upper bound the denominator, known as the
\bem{normalization constant} $z$.

\begin{proof}[of \Cref{lem:rh_norm_loose}]
	Let's consider a lattice of balls centered around an arbitrary highway node
	$u$. Let's define a notion of ``ball distance'' $b$ to measure the distance
	between two balls in this ball lattice. Let $\mathcal{B}_b(u)$ be the set of
	all balls at ball distance $b$ from a ball centered at $u$. There is 1 ball
	at ball distance 0 ($|\mathcal{B}_0(u)| = 1$), 8 balls at ball distance 1,
	and in general at most $8b$ balls at distance $b$ for $b > 0$ (see
	\Cref{fig:nestedlattice}). The minimum distance between $u$ to a node in
	another ball at distance $b$ is $2b - 1$ times the ball radius for $b > 0$.
	Let's consider a lattice of balls with radius $3 \sqrt{k \log{n}}$. From
	\Crefpart{lem:rh_nested_lattice}{lem:rh_n_upper} we know that there are at
	most $41 \log{n}$ highway nodes within this ball w.h.p. Let's also find the
	normalization constant in two parts, first due to highway nodes in $b > 0$
	($z_{>0}$), and then due to highway nodes within the same ball ($z_0$).

	Note that any two balls are separated by ball distance at most $2n/$twice
	the ball radius, or $\frac{n}{3 \sqrt{k \log{n}}}$.
	\begin{align*}
		z_{>0} &\leq \sum_{b = 1}^{\frac{n}{3 \sqrt{k \log{n}}}}{
			\frac{(\text{max \# highway nodes in } \mathcal{B}_b(u))}{(\text{min distance to node in }\mathcal{B}_b(u))^2}} \\
		&\leq \sum_{b = 1}^{\frac{n}{3 \sqrt{k \log{n}}}}{
			\frac{41 \log{n} \times 8b}{(2b - 1)^2 \times 9 k \log{n}}}
		< \frac{37}{k} \sum_{b = 1}^{\frac{n}{3 \sqrt{k \log{n}}}}{
			\frac{b}{(2b - 1)^2}} \\
		&\leq \frac{37}{k} \sum_{b = 1}^{\frac{n}{3 \sqrt{k \log{n}}}}{
			\frac{1}{b}}
		= \frac{37}{k} \mathcal{H}\left (\frac{n}{3 \sqrt{k \log{n}}}\right ) \\
		&\leq \frac{37}{k} \mathcal{H}\left (\frac{n}{3 \sqrt{\log{n}}}\right )
		< 26\frac{\log{n}}{k} \text{ for } n > 2
	\end{align*}

	Now that we showed the contribution of highway nodes in different balls from
	$u$, let's bound the contribution due to highway nodes within the same ball.
	We are only interested in the normalization constant for nodes that we visit
	along the highway, which we will show is at most $\mathcal{O}(\log^2{n})$
	nodes. Knowing this, we can use the improved bound for balls of radius $3
	\sqrt{k \log{\log{n}}}$, which from
	\Crefpart{lem:rh_nested_lattice}{lem:rh_logn_upper} we know contain fewer
	than $41 \log{\log{n}}$ highway nodes w.h.p. Let's consider the worst case
	where they are all bunched up around $u$. Let's denote their contribution
	$z_{0, \text{inner}}$.
	\begin{align*}
		z_{0, \text{inner}} &\leq \sum_{j = 1}^{\lceil \sqrt{41 \log{\log{n}}} \rceil}{
			\frac{4j}{j^2}}
		< 4 \mathcal{H}\left ( \sqrt{41 \log{\log{n}}} + 1 \right ) \\
		&< 25 \log{\log{\log{n}}} \text{ for } n > 5
	\end{align*}

	Recall that we can still have up to $41 \log{n}$ highway nodes in in the
	same (large) ball as $u$. Let's assume they are all as close as possible,
	meaning that they are all at the edge of the inner ball. Let's denote their
	contribution $z_{0, \text{outer}}$.
	\begin{equation*}
		z_{0, \text{outer}} < \frac{41 \log{n}}{(3 \sqrt{k \log{\log{n}}})^2}
		= \frac{41}{9} \frac{\log{n}}{k \log{\log{n}}}
	\end{equation*}

	Combining these results, we obtain:
	\begin{equation*}
		z 
		< 25 \log{\log{\log{n}}} + \frac{41}{9} \frac{\log{n}}{k \log{\log{n}}} + 26\frac{\log{n}}{k} \text{ for } n > 5
	\end{equation*}

	w.h.p., for at most $\mathcal{O}(\log^2{n})$ invocations.
\end{proof}

We provide a tighter bound for the normalization constant, $z'$, in a similar
fashion:

\begin{proof}[of \Cref{lem:rh_norm_tight}]
	Recall from \Crefpart{lem:rh_nested_lattice}{lem:rh_const_upper} that balls
	of radius $2 \sqrt{k}$ have at most 18 highway nodes with probability at
	least $1/2$. When this occurs, $z_{0, \text{inner}}$ can be improved to:
	\begin{equation*}
		z_{0, \text{inner}} < \sum_{j = 1}^{5}{\frac{4j}{j^2}} = 4 \mathcal{H}(5) < 10
	\end{equation*}

	Meanwhile, $z_{0, \text{outer}}$ changes to:
	\begin{equation*}
		z_{0, \text{outer}} < \frac{41 \log{n}}{(2 \sqrt{k})^2}
		= \frac{41}{4} \frac{\log{n}}{k}
	\end{equation*}

	Overall, with probability at least $1/2$, we obtain the improved bounds on
	the normalization constant:
	\begin{equation*}
		z' < 10 + 37 \frac{\log{n}}{k} \text{ for } n > 2
	\end{equation*}
\end{proof}

\subsubsection{Probability of Distance Halving}

As explained before, the first step is to show that we can use the improved
bounds on the normalization constant by incurring only an increase in a constant
factor to the probability of halving the distance:

\begin{proof}[of \Cref{lem:rh_improved_z}]
	The probability of the improved normalization constant bound $z'$ applying
	is at least $1/2$, and this probability is independent for any nodes a
	distance of at least $4 \sqrt{k}$ apart (see
	\Crefpart{lem:rh_nested_lattice}{lem:rh_const_upper}). For values of $k \in
	o\left (\frac{\log{n}}{\log{\log{\log{n}}}}\right )$, the improved
	normalization constant bound is already only a constant factor better. For
	values of $k \in \Omega\left (\frac{\log{n}}{\log{\log{\log{n}}}}\right )$
	we will show that we can always visit highway nodes that are at least $4
	\sqrt{k}$ apart, so that we have independence. All our routing algorithms
	expect to take $\mathcal{O}(\log{n})$ hops on the highway, or $a \log{n}$
	hops for some constant $a$. We expect at least $\frac{1}{2} a \log{n}$ of
	the highway nodes visited to have the improved bounds apply. By Chernoff
	bound, we visit at least $\frac{1}{4} a \log{n}$ highway nodes with the
	improved bounds w.h.p. (with probability at least $1 - n^{-\frac{a}{16
	\ln{2}}}$). Since $a$ can be picked arbitrarily large, then with high
	probability we will visit $\mathcal{O}(\log{n})$-many nodes with the
	improved bounds along our path, which is the same as our original
	expectation of how many nodes we will visit, meaning our results are the
	same up to a constant hidden by the asymptotic notation. Note that a similar
	reasoning works for smaller values of $k$ as well.
\end{proof}

Next, we need to prove a lower bound on how many nodes are in a better phase
than us w.h.p.:

\begin{proof}[of \Cref{lem:rh_ballj}]
	Kleinberg showed that there are more than $2^{2j - 1}$ nodes within lattice
	distance $2^j$ of $t$ \cite{kleinbergSmallworldPhenomenonAlgorithmic2000a},
	for $\log{\log{n}} \leq j < \log{n}$. Within this range, we expect there to
	be at least $2^{2j - 1}/k$ highway nodes. Since we are only considering the
	case where $j \geq \log(c(k + \log{n}))$, we can use this to create a
	Chernoff bound (with $\delta = 1/2$). Letting $X$ be the number of highway
	nodes:
	\begin{align*}
		\Pr(X \leq \mu/2) &\leq e^{-\frac{\mu}{8}}
		= e^{-\frac{2^{2j - 1}}{8k}}
		\leq e^{-\frac{2^{2\log(c(k + \log{n}))}}{16k}} \\
		&= e^{-\frac{[c(k + \log{n})]^2}{16k}}
		< e^{-\frac{c^2(2k\log{n})}{16k}}
		= n^{-\frac{c^2}{8 \ln{2}}} \\
		&< n^{-0.18c^2}
	\end{align*}
	
	In summary, since we picked $\delta = 1/2$, we expect at least $2^{2j -
	2}/k$ highway nodes, to be within lattice distance $2^j$ of $t$ w.h.p. (with
	probability at least $1 - n^{-0.18c^2}$).
\end{proof}

Finally, we use these results to prove the main lemma of this section, the
probability of halving the distance:

\begin{proof}[of \Cref{lem:rh_halving}]
	From our previous results, we know we can use the improved bounds for the
	normalization constant, $z' = 10 + 37\frac{\log{n}}{k}$, with at most a
	constant factor increase in the probability of halving the distance.
	Furthermore, we know that there exist at least $2^{2j - 2}/k$ highway nodes
	in better phases w.h.p. Since they are in phase $j$ or better, they are each
	within lattice distance $< 2^{j + 1} + 2^{j} < 2^{j + 2}$ from $u$. Using
	this, and letting $v$ be an arbitrary long-range connection of $u$, we
	obtain:
	\begin{equation*}
		\Pr(v \in B_{2^j}(t)) > [64 k z']^{-1} > [64k \times 37(1 + \log(n)/k)]^{-1}
	\end{equation*}

	The probability of $v$ not being in a better phase is similarly $1 - \Pr(v
	\in B_{2^j}(u))$. Recalling that each highway node has $Qk$ independently
	chosen random long-range connections, the probability of none of them being
	connected to a better phase is therefore $(1 - \Pr(v \in B_{2^j}(u)))^{Qk}
	\leq e^{-Qk \Pr(v \in B_{2^j}(u))}$. The probability of any one of them
	being connected is therefore:
	\begin{equation*}
		\Pr(\exists v \in B_{2^j}(u)) \geq 1 - e^{-Qk \Pr(v \in B_{2^j}(u))}
		> 1 - e^{-\frac{Qk}{2368(k + \log{n})}}
	\end{equation*}

	When $k \in o(\log{n})$, the $\log{n}$ term in the denominator dominates,
	and we obtain similar asymptotic results to \Cref{lem:kh_halving}. When $k
	\in \Omega(\log{n})$, the $k$ term in the denominator dominates, cancelling
	out the $k$ term in the numerator, and leaving us with a constant term
	dependent on $Q$. It is worth noting that the constant factors in this
	analysis are very loose, and also considerably decrease for larger values of
	of $n$. In any case, we obtain that the probability of halving the distance
	is at least in $\mathcal{O}(k/\log{n})$ for $k \in o(\log{n})$, and at least
	$f(Q) = \mathcal{O}(1)$ for $k \in \Omega(\log{n})$.
\end{proof}

\subsection{Removing Local Contact Dependence}

In this section, we complete the proof of \Cref{thm:rh_routing} by removing the
dependence on local connections. The results of the theorem directly follow.

If we do find a long-range connection that takes us to the next phase, we can
just take it, but what do we do when there aren't any? To continue the Kleinberg
analogy, we would just keep taking local connections to keep re-rolling the
dice, and as long as we never traverse any space twice and never traverse any
space that is within $4 \sqrt{k}$ of previous spaces (because of
\Crefpart{lem:rh_nested_lattice}{lem:rh_const_upper}), we can assume each step
taken is independent of other steps. The obvious problem here is that there is
no notion of ``local connections'' in this randomly selected highway. We could
either greedily take local connections in the entire graph until we happen to
reach a highway node again (in expected $\mathcal{O}(k)$ time), or we can simply
pick any long-range connection that takes us closer to the destination by at
least $4 \sqrt{k}$. For values of $k \in o(\log{n})$, we will use the first
method (greedily taking local connections), and for values of $k \in
\Omega(\log{n})$, we will use the second.

\subsubsection{Values of $k \in o\left
(\frac{\log{n}}{\log{\log{\log{n}}}}\right )$}

For these smaller values of $k$, from \Cref{lem:rh_halving}, we expect to take
$\mathcal{O}(\log(n)/k)$ hops on highway nodes to reach the next phase, and
since there are at most $\log{n}$ total phases, we expect to visit at most
$\mathcal{O}(\log^2(n)/k)$ highway nodes throughout the entire routing process
w.h.p. In the worst case, whenever we can't halve the distance, we never have
any closer long-range connections, so we would need to greedily move along local
contacts towards $t$ until reaching another highway node. Recalling that each
node has probability $1/k$ of being a highway node, and that we expect to visit
a highway node every $k$ independent hops. In order to avoid visiting highway
nodes within $4 \sqrt{k}$ of each other, we can first walk $4 \sqrt{k}$ hops
before checking for highway nodes, which we will expect to find after $4
\sqrt{k} + k \in \mathcal{O}(k)$ hops. Over the entire duration of the routing,
we expect to spend $\mathcal{O}(\log^2(n)/k \times k) = \mathcal{O}(\log^2{n})$
hops using local connections to reach highway nodes w.h.p.

\subsubsection{Values of $k \in \Omega\left
(\frac{\log{n}}{\log{\log{\log{n}}}}\right )$}

For these larger values of $k$, we will prove that we can find a long-range
connection to an arbitrary highway node $u$ in phase $\log(c(k + \log{n})) \leq
j < \log{n}$ that is at least $4 \sqrt{k}$ closer to the destination $t$, w.h.p.
Recall that long-range connections are always only between highway nodes, so
taking them will always keep us on the highway. To find the probability of one
of these connections existing, we consider a ball of radius $d - 4 \sqrt{k}$
centered on the destination $t$ ($B_{d - 4 \sqrt{k}}(t)$), where $d$ is the
distance from $u$ to $t$ ($d = d(u, t)$). Let's lower bound the probability of
an arbitrary long-range connection of $u$ going into this ball. We can assume w.l.o.g.
that $u$ shares either an $x$ or a $y$ coordinate with
$t$ (see \Cref{lem:smaller-ball-overlap}). As before, let's consider the
nested lattice construct, where this time $u$ sits at the edge of one such ball.
There are exactly $2b - 1$ balls closer to $t$ than $u$ is at ball distance $b$,
for $1 \leq b \leq \frac{2d - 2}{6\sqrt{k \log{n}}}$. In order to enforce the
condition that we improve the distance by at least $4 \sqrt{k}$, we can dismiss
the outer layer of balls, leaving us with $2b - 3$ balls for $2 \leq b \leq
\frac{d - 1}{3 \sqrt{k \log{n}}} - 1$. The maximum distance from $u$ to any node
in one of these balls is $2b \times 3 \sqrt{k \log{n}}$. From
\Crefpart{lem:rh_nested_lattice}{lem:rh_n_lower}, we know that each ball of
radius $3 \sqrt{k \log{n}}$ has at least $9 \log{n}$ highway nodes w.h.p.
This lower bound must apply w.h.p. for any highway node along our path, so we
must use the looser normalization constant bound, $z$. We can now lower bound
the probability that $v$ is in one of these closer balls:
\begin{align*}
	\Pr(v \in \mathcal{B}_{d - 4\sqrt{k}}(u) &\geq \sum_{b = 2}^{\frac{d - 1}{3 \sqrt{k \log{n}}} - 1}{
		\frac{(\text{min \# dist $b$ highway nodes})}{z (\text{max dist to node at dist } b)^2}} \\
	&\geq \sum_{b = 2}^{\frac{d - 1}{3 \sqrt{k \log{n}}} - 1}{
		\frac{(2b - 3) \times 9 \log{n}}{z(2b \times 3 \sqrt{k \log{n}})^2}} \\
	&= \frac{2}{9kz} \sum_{b = 2}^{\frac{d - 1}{3 \sqrt{k \log{n}}} - 1}{
		\frac{2b - 3}{b^2}} \\
	&> \frac{2}{9kz} \left [ \ln\left ( \frac{d - 1}{3 \sqrt{k \log{n}}} - 1 \right )  \right ] \\
	&> \frac{\ln\left ( \frac{d}{3 \sqrt{k \log{n}}} \right )}{9kz}
\end{align*}

Note that this result holds for $d \ge c(k + \log{n})$ for large enough constant
$c$.

This result holds for a single long-range connection of $u$. The probability
that none of $u$'s long-range connections are closer is:
\begin{align*}
	\Pr(\text{none closer}) &< \left [1 - \frac{\ln\left ( \frac{d}{3 \sqrt{k \log{n}}} \right )}{9kz} \right ]^{Q k} \\
	&= \left ( \left [ 1 - \frac{\ln\left ( \frac{d}{3 \sqrt{k \log{n}}} \right )}{9kz} \right ]^{k z} \right )^{\frac{Q}{z}} \\
	&< e^{-\frac{Q}{9z} \ln\left ( \frac{d}{3 \sqrt{k \log{n}}} \right ) } \\
	&< e^{-\frac{Q \ln{d}}{9 z}} = d^{-\frac{Q}{9z}}
\end{align*}
again, holding for large enough constant $c$.

With this probability established, let's try seeing how many hops we can take
before we hit a dead end. Let's do this in two parts. First, let's see if we can
get to within a distance of $(a \log{n})^{b z}$ from $t$ for some constants $a$
and $b$. Since the probability of hitting a dead end only increases as we get
closer, the probability of hitting a dead end while in this range is always
going to be $< (a \log{n})^{-\frac{b Q}{9}}$. This gives us an expected number
of hops of $\Omega\left ( (a \log{n})^\frac{b Q}{9} \right )$ w.h.p. When
setting $b$ large enough, we can get this to be $\Omega(\log^2{n})$, which is
more than the maximum number of steps we expect to spend in routing.

In the second part, we are within distance $(a \log{n})^{b z} \geq d \geq c(k +
\log{n})$ of $t$. From \Cref{lem:rh_norm_loose}, we know that our normalization
constant $z$ is at most $\mathcal{O}(\log{\log{\log{n}}})$ for $k \in
\Omega\left (\frac{\log{n}}{\log{\log{\log{n}}}}\right )$ w.h.p., so $z < w
\log{\log{\log{n}}}$ for some constant $w$. This gives us probability of hitting
a dead end of less than $(c (k + \log{n}))^{-\frac{b Q}{9 w \log{\log{\log{n}}}}}$.
Setting constant $c$ large enough, we can expect to take at least $\Omega\left (
\log{n}^\frac{Q}{9 w \log{\log{\log{n}}}} \right )$ hops on the highway within
this range before hitting a dead end. Let's call this our ``allowance''.
While this is less than the maximum number of steps we expect to spend while
routing, we only have at most $b z \log(a \log{n})$ phases left in this second
part, while we spend at most $\mathcal{O}(\log{\log{\log{n}}})$ highway hops per
phase. Putting this together, we expect to take at most
$f(\log{\log{\log{n}}})^2 \log{\log{n}}$ hops in this second part of
the routing for some large enough constant $f$. Let's determine if our allowance
is enough to get us to $t$, by considering the ratio $r$ between our allowance
and the number of remaining highway hops:
\begin{align*}
	r &= \lim_{n \to \infty}{\frac{\log{n}^\frac{Q}{9 w \log{\log{\log{n}}}}}{f(\log{\log{\log{n}}})^2 \log{\log{n}}}} \\
	\log{r} &= \lim_{n \to \infty}{\frac{Q \log{\log{n}}}{9 w \log{\log{\log{n}}}} - \log(f(\log{\log{\log{n}}})^2 \log{\log{n}}) } \\
	&= \lim_{n \to \infty}{\frac{\log{\log{n}}}{\log{\log{\log{n}}}} - \log((\log{\log{n}})^3) } = \infty
\end{align*}

Since $\log{r}$ tends towards infinity, $r$ tends towards infinity, meaning that
for a large enough constant $c$, our allowance is enough to get us to $t$ w.h.p.
for arbitrarily large $n$. Combining these results, we can conclude that we can
reach a highway node within distance $c(k + \log{n})$ of $t$ w.h.p. while only
taking long-range connections that improve our distance by at least $4
\sqrt{k}$, thus eliminating the need for local connections.

\subsection{Randomized Highway Variant}\label{sec:rhvariant}

If it is desired to improve the greedy decentralized routing time of the
randomized highway model for smaller values of $k$ to be in line with the
Kleinberg highway model, it is possible to reintroduce local connections within
the highway nodes, despite the fact that nodes are picked arbitrarily. One
straightforward way to do so is to add a local connection between each highway
node to an arbitrary highway node in each of the 8 adjacent balls of radius
$3\sqrt{k \log{n}}$ (see \Cref{fig:nestedlattice}). From
\Crefpart{lem:rh_nested_lattice}{lem:rh_n_lower} we know that at least one
highway node will exist in each of those balls w.h.p. At least one of these
adjacent highway nodes will be at least $3 \sqrt{k \log{n}}$ closer to the
destination. With this variant, the routing time for smaller values of $k$ is
improved to $\log^2(n)/k$, while only increasing the average degree by a
constant, in line with the randomized highway model. However, this model is not
as clean as the original, and still maintains the same optimal parameter $k$ of
$\Theta(\log{n})$ with the same result of $\Theta(\log{n})$ hops, so we will not
consider it further.

\subsection{Windowed NPA Proofs}\label{sec:wnpaproof}

In this section, we prove that the windowed NPA model maintains a constant average degree
while having a greedy, decentralized routing algorithm taking at most
$\mathcal{O}(\log^{1 + \epsilon}{n})$ hops w.h.p. Specifically, we will define the
routing algorithm as follows: define the subgraph made of nodes with popularity
$\log{n} \leq k \leq A \log{n}$ as the highway, ignoring any long-range
connections that do not connect two ``highway'' nodes. We expect to have
$\mathcal{O}(1 / \log^{1 + \epsilon}{n})$ highway nodes. Using the results from the
previous section, we are able to route in $\mathcal{O}(\log^{1 + \epsilon}{n})$
hops w.h.p.

First, we prove the expected constant average degree:
\begin{lemma}
	The average node degree in the windowed NPA model is $Q$.
\end{lemma}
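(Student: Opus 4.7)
The plan is direct: since each node $u$ independently draws a popularity $k_u$ and then adds exactly $\epsilon Q \cdot k_u$ long-range connections, the expected number of long-range edges originating at a single node is $\epsilon Q \cdot \E{k}$. The task therefore reduces to computing $\E{k}$ under the power-law distribution $\Pr(k) \propto k^{-(2+\epsilon)}$. Since the degree of $u$ is determined as soon as $k_u$ is drawn, no further averaging over the graph structure is needed; in particular, the window constraint $[k_u/A,\,A k_u]$ plays no role here because it restricts \emph{which} nodes $u$ attaches to, not \emph{how many} connections $u$ makes.

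First I would fix the normalizing constant. Treating $k$ as continuous on $[1,\infty)$, the density $f(k) = c \cdot k^{-(2+\epsilon)}$ must integrate to $1$; since $\int_1^\infty k^{-(2+\epsilon)}\,dk = 1/(1+\epsilon)$, we get $c = 1+\epsilon$. Then
\[
\E{k} \;=\; (1+\epsilon)\int_1^\infty k^{-(1+\epsilon)}\,dk \;=\; \frac{1+\epsilon}{\epsilon}.
\]
Multiplying by $\epsilon Q$ gives an average long-range out-degree of $(1+\epsilon)Q$ per node; absorbing the $(1+\epsilon)$ factor into the constant $Q$ (or equivalently rescaling the coefficient of $k_u$ so that the product is exactly $Q$) yields the claimed average degree of $Q$. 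A completely analogous discrete computation on $k \in \{1,2,\dots\}$ uses $\zeta(2+\epsilon)$ as the normalizer and gives $\E{k} = \zeta(1+\epsilon)/\zeta(2+\epsilon)$, which is again a finite constant. Finally, linearity of expectation across the $|V|$ independent popularity draws gives that the average long-range degree over the whole graph is this same constant times $Q$; the four local connections per node contribute only an additive $\mathcal{O}(1)$ that is subsumed in the asymptotic statement.

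The main, and essentially only, subtlety is guaranteeing finiteness of $\E{k}$: the whole point of forcing $\alpha = 2+\epsilon > 2$ is that $\int_1^\infty k \cdot k^{-\alpha}\,dk$ converges if and only if $\alpha > 2$. Thus the lower bound on $\alpha$ is precisely what makes the model well-defined with a constant average degree; without it, $\E{k}$ would diverge and the average degree would blow up with $n$. Once this convergence is noted, the bound is a single normalization-and-integration computation.
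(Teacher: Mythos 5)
Your approach is identical to the paper's: compute $\epsilon Q\,\E{k}$ under the power-law density $\propto k^{-(2+\epsilon)}$ on $[1,\infty)$. In fact you are more careful than the paper about the normalizing constant. The paper first computes
\begin{equation*}
\int_{1}^{\infty}\frac{\epsilon Q\,k}{k^{2+\epsilon}}\,dk = Q,
\end{equation*}
and then separately records that $\int_1^\infty k^{-(2+\epsilon)}\,dk = 1/(1+\epsilon)$, but never actually folds that normalization into the expectation. As you correctly observe, once the density is properly normalized to $(1+\epsilon)k^{-(2+\epsilon)}$ one obtains $\E{k}=(1+\epsilon)/\epsilon$ and hence an average long-range out-degree of $(1+\epsilon)Q$, not $Q$ exactly. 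Your fix (rescale $Q$ or the coefficient of $k_u$ by $1/(1+\epsilon)$, i.e.\ absorb the constant) is the right way to reconcile the literal statement of the lemma with the computation; since all downstream uses of this lemma only need the average degree to be $\Theta(Q)=\Theta(1)$, the discrepancy has no bearing on the routing bounds. Your extra remarks about the discrete $\zeta$-normalizer, the irrelevance of the window $[k_u/A,\,Ak_u]$ to the out-degree count, and the role of $\alpha>2$ in making $\E{k}$ finite are all correct and, if anything, make the argument cleaner than the paper's terse version.
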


\begin{proof}
	\begin{equation*}
		\int_{k = 1}^\infty{\epsilon Q k / k^{2+ \epsilon} dk} = \epsilon Q \int_{k = 1}^\infty{1 / k^{1 + \epsilon} dk} = \epsilon Q \times 1 / \epsilon = Q
	\end{equation*}

	Where the normalization constant to pick $k$ is:
	\begin{equation*}
		\int_{k = 1}^\infty{1 / k^{2 + \epsilon} dk} =\frac{1}{1 + \epsilon}
	\end{equation*}
\end{proof}

Next, we show that there are an expected $\mathcal{O}(1 / \log^{1 + \epsilon}{n})$
highway.

\begin{lemma}
	There are $\Theta(\log^{1 + \epsilon}{n})$ highway nodes w.h.p.
\end{lemma}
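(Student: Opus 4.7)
The plan is to combine a direct computation of the per-vertex highway-membership probability with a Chernoff concentration argument over all $n^2$ vertices. Because every vertex independently samples its popularity $k$ from $\Pr(k) \propto k^{-(2+\epsilon)}$, and a vertex is declared a highway node precisely when $\log n \le k \le A \log n$, the total highway count is a sum of $n^2$ independent Bernoulli indicators, which is exactly the setting where a multiplicative Chernoff bound gives two-sided $\Theta$ concentration in one shot.

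First I would compute the per-vertex probability $p_H$ of landing in the highway window. Using the normalization $1/(1+\epsilon)$ computed in the preceding average-degree lemma,
\begin{equation*}
p_H = (1+\epsilon) \sum_{k = \lceil \log n \rceil}^{\lfloor A \log n \rfloor} k^{-(2+\epsilon)}.
\end{equation*}
I would bracket this sum between two matching integrals, $\int_{\lceil \log n \rceil}^{\lfloor A \log n \rfloor + 1} (1+\epsilon)\, t^{-(2+\epsilon)}\, dt$ and $\int_{\lceil \log n \rceil - 1}^{\lfloor A \log n \rfloor} (1+\epsilon)\, t^{-(2+\epsilon)}\, dt$, exploiting the monotonicity of $t^{-(2+\epsilon)}$. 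Evaluating either integral as $\tfrac{1 + \epsilon}{1 + \epsilon}\bigl[t^{-(1+\epsilon)}\bigr]$ gives a clean closed form in $\log n$, $A$, and $\epsilon$, with the $\pm 1$ endpoint shifts swallowed into the hidden constants since $\log n \to \infty$.

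The expected number of highway nodes is then $\mu = n^2 p_H$, whose asymptotic order is read off the closed form above; I would verify that this matches the claimed $\Theta(\log^{1+\epsilon} n)$ rate before moving on. For the w.h.p. upgrade I would apply the multiplicative Chernoff bound $\Pr(|X - \mu| \ge \delta \mu) \le 2 \exp(-\delta^2 \mu / 3)$ to the indicator sum $X$ with any fixed $\delta \in (0,1)$. Since $\mu$ is at least superlogarithmic in $n$, this tail decays faster than any fixed inverse polynomial in $n$, yielding the two-sided $\Theta(\log^{1+\epsilon} n)$ concentration with high probability.

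The only genuine technicality is the discrete-versus-continuous bracketing at the endpoints of the popularity window, since neither $\log n$ nor $A \log n$ need be integer-valued; the monotone integral sandwich described above handles this, and the $O(1)$ rounding error is absorbed into the $\Theta$. After $p_H$ is pinned down, the Chernoff step is a one-liner, so there is no deep obstacle — the lemma is a standard independent-sum concentration argument built on top of the normalization computation from the preceding lemma.
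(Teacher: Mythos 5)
Your structural plan—estimate the per-vertex highway-membership probability $p_H$ by sandwiching the discrete sum between integrals of the decreasing function $t^{-(2+\epsilon)}$, multiply by $n^2$ to get the expectation, and then invoke a multiplicative Chernoff bound over $n^2$ independent indicators—is sound, and in fact it is \emph{more} rigorous than the paper's own proof, which stops after computing the single-vertex probability $p_H \propto \log^{-(1+\epsilon)}n$ and never explicitly performs the concentration step. So the Chernoff upgrade is a genuine improvement.

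However, there is a real problem with the step where you write ``$\mu = n^2 p_H$, whose asymptotic order is read off the closed form above; I would verify that this matches the claimed $\Theta(\log^{1+\epsilon} n)$ rate.'' Had you actually carried out that verification, you would have found that it does \emph{not} match: your own sandwich gives $p_H = \Theta\bigl(\log^{-(1+\epsilon)} n\bigr)$, so $\mu = n^2 p_H = \Theta\bigl(n^2/\log^{1+\epsilon} n\bigr)$, which is off by a factor of $n^2/\log^{2(1+\epsilon)} n$ from the stated $\Theta(\log^{1+\epsilon} n)$. The lemma as printed in the paper appears to be a typo (the same slip appears earlier, where the text claims ``$\mathcal{O}(1/\log^{1+\epsilon} n)$ highway nodes,'' evidently meaning a \emph{fraction}); the surrounding argument actually uses the fact that one has a randomized-highway instance with parameter $k' = \log^{1+\epsilon} n$, i.e.\ a $1/\log^{1+\epsilon} n$ fraction of highway nodes, i.e.\ $\Theta(n^2/\log^{1+\epsilon} n)$ of them. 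Your Chernoff bound still applies—indeed with a much larger $\mu$ the tail is even smaller—but a proof that declares it will ``verify'' the stated rate and assumes the check will succeed is not a proof; it is an unredeemed promissory note that, when cashed, would contradict the statement being proved. You should have flagged the mismatch and stated the corrected rate $\Theta(n^2/\log^{1+\epsilon}n)$.
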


\begin{proof}
	Now, let's find the probability that a node has popularity between $\log n$ and $A
	\log n$:
	\begin{align*}
		\Pr(\log n \leq k \leq A \log n) &= \int_{k = \log{n}}^{A \log{n}}{\Pr(k) dk} \\
		&= \int_{k = \log{n}}^{A \log{n}}{1/k^{2+\epsilon} dk} \\
		&= \frac{(A^{1 + \epsilon} - 1) \ln^{1+ \epsilon}(2)}{(1 + \epsilon)A^{1 + \epsilon}} \frac{1}{\log^{1 + \epsilon}{n}}
	\end{align*}

	Since $A$ and $\epsilon$ are predetermined constants, the probability that a
	node has a popularity in this range is $\propto \log^{-(1 + \epsilon)}(n)$.
\end{proof}

Importantly, each node within this range of popularities considers all other
points within this range of popularities as long-distance node candidates with
equal likelihoods, a requirement important for the analysis of the randomized
highway model. Next we must prove:
\begin{lemma}
	Each highway node expects to connect a constant fraction of its connections
	to other highway nodes, where the constant is at least $[1 + A^{1 +
	\epsilon}]^{-1}$.
\end{lemma}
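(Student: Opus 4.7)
The plan is to reduce the claim to a ratio-of-integrals calculation on the popularity density $\Pr(k) \propto k^{-(2+\epsilon)}$. First, for any highway node $u$ with popularity $k_u \in [\log n, A\log n]$, I would verify that the entire highway popularity range $[\log n, A\log n]$ sits inside $u$'s window $[k_u/A, A k_u]$; both containment inequalities $k_u/A \leq \log n$ and $A k_u \geq A \log n$ are immediate from the bounds on $k_u$. Hence every highway node is always a legitimate candidate to be a long-range endpoint of $u$.

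Next, I would argue that because node popularities are drawn i.i.d.\ and independently of grid position, the distance-based weighting used to select among window-candidates induces no bias on their popularities. Consequently the probability that a single long-range connection of $u$ lands on a highway node is exactly
\[
\frac{\int_{\log n}^{A\log n} k^{-(2+\epsilon)}\,dk}{\int_{k_u/A}^{A k_u} k^{-(2+\epsilon)}\,dk}.
\]
Using the antiderivative $-\tfrac{1}{1+\epsilon}k^{-(1+\epsilon)}$ and substituting $\alpha := A^{1+\epsilon}$, the denominator's bracketed factor simplifies to $(\alpha^2-1)/\alpha$ and the numerator's to $(\alpha-1)/\alpha$; after cancellation the ratio collapses to $(k_u/\log n)^{1+\epsilon} \cdot (1 + A^{1+\epsilon})^{-1}$. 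Since $k_u \geq \log n$, the prefactor is at least $1$, so each long-range edge lands on a highway node with probability at least $[1 + A^{1+\epsilon}]^{-1}$. Linearity of expectation over $u$'s $\epsilon Q k_u$ long-range edges then gives the claimed expected fraction.

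The main obstacle, and essentially the only substantive one, is the modeling step: justifying that restricting endpoints to the popularity window does not distort their popularity distribution. This is not a deep computation but requires a careful appeal to the independence of popularities and spatial positions in the windowed NPA model, so that the induced conditional distribution on a selected neighbor's popularity is simply the prior restricted and renormalized to $[k_u/A, A k_u]$. Once that independence is invoked the remainder is the elementary integral evaluation above.
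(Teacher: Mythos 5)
Your proposal is correct and follows essentially the same route as the paper: a ratio-of-integrals calculation on the popularity density $k^{-(2+\epsilon)}$ over the highway range versus the window, which collapses to $[1 + A^{1+\epsilon}]^{-1}$. The paper simply asserts $k_u = \log n$ is the worst case and plugs it in, while you carry general $k_u$ through (obtaining the monotone prefactor $(k_u/\log n)^{1+\epsilon} \geq 1$) and explicitly justify the containment of the highway range in $u$'s window and the popularity--position independence that underlies the ratio formula, details the paper leaves implicit.
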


\begin{proof}
	The case where there is the least probability of overlap is when $k =
	\log{n}$. Let's consider $v$, an arbitrary long-range connection of node
	$u$, where $k_u = \log{n}$. The probability that $v$ is part of the highway
	is:
	\begin{equation*}
		\Pr(v \in \text{highway}) = \frac{\int_{k = \log{n}}^{A \log{n}}{k^{-2-\epsilon} dk}}{\int_{k = \log{n}/A}^{A \log{n}}{k^{-2-\epsilon} dk}}
		= [1 + A^{1 + \epsilon}]^{-1}
	\end{equation*}
\end{proof}

This is enough to set up an instance of the randomized highway model. An $(N, P,
Q, \epsilon, A)$ instance of the windowed NPA model corresponds with an $(N' =
N, P' = P, Q' = \epsilon Q [1 + A^{1 + \epsilon}]^{-1}, k' = \log^{1 +
\epsilon}{n})$ instance with a few minor modifications. The highway graph,
instead of consisting of nodes with degrees $k$, consists of nodes with degrees
$\log{n} \leq k \leq A \log{n}$.

A little nuance applies since while $k = \log^{1 + \epsilon}{n}$, each of the
nodes has fewer connections, only $\mathcal{O}(\log{n})$. However, the constant
probability of halving the distance analysis still holds, and this algorithm
achieves $\mathcal{O}(\log^{1 + \epsilon}{n})$ expected total greedy-routing
steps. This concludes the proof for \Cref{thm:bak_routing}.

\subsection{Miscellaneous Proofs}

\begin{lemma}\label{lem:smaller-ball-overlap} 
    Let $S_d(w)$ denote the set of vertices at lattice distance $d$ away from any vertex $w$. 
    Let $u$ be any vertex, and let $v$ be any vertex such that $v \in S_d(u)$, and let 
    $B = B_d(u)$. Then $|S_j(v) \cap B|$ is $\Theta{(j)}$ for all $1 \leq j \leq 2d$.
\end{lemma}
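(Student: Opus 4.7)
My approach is to work with explicit coordinates. By the symmetries of the wrap-around grid (rotation, reflection about each axis, and the swap $(x,y) \mapsto (y,x)$), I may assume without loss of generality that $u$ is at the origin and that $v = (a, b)$ with $a + b = d$, $a, b \geq 0$, and $a \leq b$. Then $S_j(v) = \{(a+p, b+q) : |p| + |q| = j\}$ has exactly $4j$ lattice points for $j \geq 1$, giving the upper bound $|S_j(v) \cap B| \leq 4j$ immediately.

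For the lower bound of $\Omega(j)$, I focus on the ``bottom-left face'' of $S_j(v)$: the $j + 1$ lattice points $w_t = (a - t, b - j + t)$ for $t \in \{0, 1, \ldots, j\}$, all lying on the line $x + y = d - j$, which is parallel to the face of $B_d(u)$ that contains $v$. Writing $\alpha = a - t$ and $\beta = b - j + t$, so that $\alpha + \beta = d - j$ and $d(u, w_t) = |\alpha| + |\beta|$, the point $w_t$ automatically lies in $B_d(u)$ whenever $\alpha$ and $\beta$ share a sign (or either is zero), since in that case $|\alpha| + |\beta| = |d - j| \leq d$. A short case analysis on the size of $j$ relative to $a, b, d$ then yields $\Omega(j)$ points: when $j \leq a$, all $j+1$ face points have $\alpha, \beta \geq 0$; when $a < j \leq d$, the $\min(a, j)+1$ points with both coordinates nonnegative are supplemented by $\Omega(j)$ ``overflow'' points where exactly one of $\alpha, \beta$ is negative and the constraint $|\alpha|+|\beta| \leq d$ cuts out a linear range of $t$ (namely $t \leq a + j/2$ or $t \geq (j-2b)/2$); and when $d < j \leq 2d$, the $j - d + 1$ face points with $\alpha, \beta \leq 0$ lie at distance exactly $j - d \leq d$ from $u$, again supplemented by the two adjacent overflow wedges.

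I expect the main obstacle to be the clean bookkeeping across the cases $j \leq a$, $a < j \leq b$, $b < j \leq d$, and $d < j \leq 2d$, and verifying that when the bottom-left face alone is insufficient (e.g., when $a$ is small and $j \approx d$), enough points from the neighboring faces of $S_j(v)$, those on $x - y = 2a - d \pm j$, fall within $B_d(u)$. A cleaner geometric alternative would be to observe that near $v$, the set $B_d(u)$ is locally a half-plane (or a quarter-plane when $v$ sits at a corner with $a = 0$ or $b = 0$); then at least a constant fraction of the $4j$ points on $S_j(v)$ must lie on the $u$-side of this local boundary, trimmed only by a bounded amount near the global corners of $B_d(u)$, which would immediately give both the $\Omega(j)$ lower bound and the $O(j)$ upper bound in a single stroke.
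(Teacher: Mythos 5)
Your approach is genuinely different from the paper's. The paper argues through the ratio $R_{j,v} = |S_j(v) \cap B| / |S_j(v)|$, asserting that it is monotone decreasing in $j$ for fixed $v$, hence minimized at $j = 2d$, where a direct count gives $R_{2d,v} \geq 1/8$ for non-corner $v$; this is combined with $|S_j(v)| = 4j$ to conclude $|S_j(v) \cap B| \geq j/2$. You instead work in explicit coordinates, placing $v = (a,b)$ with $a + b = d$ and $a \leq b$, and count the points of $S_j(v)$ on the line $x + y = d - j$ (parallel to the face of the boundary of $B$ through $v$) together with the adjacent overflow wedges cut off by the constraint $|x| + |y| \leq d$.

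Your route is actually the more robust of the two, because the paper's monotonicity assertion is false. Take $u = (0,0)$, $v = (2,2)$, $d = 4$: then $R_1 = 2/4 = 1/2$ but $R_2 = 5/8 > R_1$, and the minimum of $R_{j,v}$ over $j$ occurs at $j = 2d-1 = 7$ with $R_7 = 4/28 = 1/7$, not at $j = 2d = 8$ where $R_8 = 5/32 > 1/7$. (The paper also writes $R_{2d,v} = d/(8d)$ where the correct count at $j=2d$ for non-corner $v$ is $(d+1)/(8d)$, a minor slip.) The final inequality $|S_j(v) \cap B| \geq j/2$ does hold, but the reasoning the paper uses to reach it does not. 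Your direct coordinate counting sidesteps the issue entirely. The case sketch you give is correct as far as I can check: the nonnegative face points plus the overflow wedge governed by $t \leq a + j/2$ do combine to give on the order of $j/2$ points in each regime of $(a,j)$, matching the paper's constant. What you have is still a sketch, though --- the bookkeeping you flag as the ``main obstacle,'' especially when $a$ is small and $j$ is near $d$ or $2d$ so that the adjacent faces of $S_j(v)$ must be invoked, does need to be written out. Your closing ``cleaner geometric alternative'' (treating $B$ near $v$ as a local half-plane or quarter-plane, trimmed near the diamond's corners) is essentially the intuition the paper was reaching for; to make it rigorous you would end up reproducing something close to your coordinate argument anyway, so the coordinate route is the one to commit to.
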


\begin{proof}
    Consider the ratio $R_{j, v} = \frac{|S_j(v) \cap B|}{|S_j(v)|}$ at each
	$1 \leq j \leq 2d$. It is clear that no matter where $v$ is located in $S_j(u)
	$, $R_{j, v}$ always grows smaller as $j$ increases. The value of $j$ that
	minimizes $R_{j, v}$ for a particular $v\in S_d(u)$ is then $2d$, and we can
	achieve $\min_v(R_{v, 2d})$ when $v$ is a non-corner vertex in $S_d(u)$, in
	which case $R_{v, 2d}=\frac{d}{8d} = 1/8$. Therefore at every $1 \leq j \leq
	2d$, we have that $\frac{1}{8} \leq \frac{|S_j(v) \cap B|}{4j}$, and
	therefore $|S_j(v) \cap B| \geq j/2$. Since we already have that $|S_j(v)
	\cap B|\leq|S_j(v)| \leq 4j$, the lemma follows.
\end{proof}

\end{document}